\def\rho{\varrho}
\newtheorem{cl}{Claim}
\def\A{\ensuremath{\mathrm{A}}}
\def\B{\ensuremath{\mathrm{B}}}
\def\T{\ensuremath{\mathrm{T}}}
\def\C{\ensuremath{\mathrm{C}}}
\def\F{\ensuremath{\mathrm{F}}}
\def\P{\ensuremath{\mathrm{P}}}
\def\pst{\ensuremath{\mathrm{P}_{st}}}
\def\spt{\ensuremath{\mathrm{SpT}}}
\def\cut{\ensuremath{\mathrm{Cut}}}
\def\d{\textbf{\boldmath{\ensuremath{\mspace{2mu} \wedge \mspace{2mu}}}}}
\def\c{\textbf{\boldmath{\ensuremath{\mspace{2mu} \mathrm{\cup} \mspace{2mu}}}}}
\def\u{\textbf{\boldmath{\ensuremath{\mspace{1mu} + \mspace{1mu}}}}}
\def\Hc{Hamiltonian circuit}
\def\Hcs{Hamiltonian circuit }
\def\Hp{Hamiltonian path}
\def\iff{if and only if }
\newcommand{\stp}{$s$-$t$ path }
\newcommand{\cE}{\ensuremath{\mathcal E}}
\newtheorem{thm}{\bfseries Theorem}
\newtheorem{cor}[thm]{\bfseries Corollary}
\newtheorem{prob}[thm]{\bfseries Problem}
\newcommand{\NPC}{\textbf{NPC\xspace}}
\newcommand{\inP}{\textbf{P\xspace}}
\newcommand{\threesat}{\textsc{3SAT}}
\newcommand{\oneinthree}{\textsc{One-In-Three 3SAT}}
\newcommand{\ov}{\overline} 
\newcommand{\cC}{\ensuremath{\mathcal{C}}}
\begin{document}


\title{On the tractability of some natural\\
 packing, covering  and partitioning problems}


\author{Attila Bern\'ath\thanks{Hungarian Academy of Sciences, Institute for
    Computer Science and Control (MTA-SZTAKI), Budapest,
    Hungary. This work was supported
    by MTA-ELTE Egerv\'ary Research Group, by the OTKA grant CK80124
    and by the ERC StG project PAAl no. 259515.
E-mail: {\tt bernath@cs.elte.hu}.}
\and
Zolt\'an Kir\'aly\thanks{Department of Computer Science and 
Egerv\'ary Research Group (MTA-ELTE), E\"otv\"os University,  P\'azm\'any
P\'eter s\'et\'any 1/C, Budapest, Hungary.  
Research was supported by 
grants (no.\ CNK 77780 and no.\ K 109240) from the National Development
 Agency of Hungary, based on a source from the Research and Technology 
Innovation Fund.
E-mail: {\tt kiraly@cs.elte.hu}}
}

\maketitle

\abstract{ In this paper we fix 7 types of undirected graphs: paths,
  paths with prescribed endvertices, circuits, forests, spanning
  trees, (not necessarily spanning) trees and cuts.  Given an
  undirected graph $G=(V,E)$ and two ``object types'' $\A$ and $\B$
  chosen from the alternatives above, we consider the following
  questions.  \textbf{Packing problem:} can we find an object of type
  $\A$ and one of type $\B$ in the edge set $E$ of $G$, so that they
  are edge-disjoint?  \textbf{Partitioning problem:} can we partition
  $E$ into an object of type $\A$ and one of type $\B$?
  \textbf{Covering problem:} can we cover $E$ with an object of type
  $\A$, and an object of type $\B$?  This framework includes 44
  natural graph theoretic questions.  Some of these problems were
  well-known before, for example covering the edge-set of a graph with
  two spanning trees, or finding an $s$-$t$ path $P$ and an $s'$-$t'$
  path $P'$ that are edge-disjoint. However, many others were not, for
  example can we find an $s$-$t$ path $P\subseteq E $ and a spanning
  tree $T\subseteq E$ that are edge-disjoint?  Most of these
  previously unknown problems turned out to be NP-complete, many of
  them even in planar graphs. This paper determines the status of
  these 44 problems. For the NP-complete problems we also investigate
  the planar version, for the polynomial problems we consider the
  matroidal generalization (wherever this makes sense).}

\section{Introduction}

In this paper we consider undirected graphs. The node set of a graph
$G=(V,E)$ is sometimes also denoted by $V(G)$, and similarly, the edge
set is sometimes denoted by $E(G)$.  A \textbf{subgraph} of a graph
$G=(V,E)$ is a pair $(V', E')$ where $V'\subseteq V$ and $E'\subseteq
E\cap (V' \times V')$. 
A graph is called \textbf{subcubic} if every node is
incident to at most 3 edges,
 and it is called \textbf{subquadratic} if every node is incident to at most 4 edges.
By a \textbf{cut} in a graph we mean the set
of edges leaving a nonempty proper subset $V'$ of the nodes (note that
we do not require that $V'$ and $V-V'$ induces a connected graph).  We use standard terminology and
refer the reader to \cite{frankbook} for what is not defined here.

We consider 3 types of decision problems with 7 types of
objects. The three types of problems are: packing, covering and
partitioning, and the seven types of objects are the following: paths
(denoted by a $\P$), paths with specified endvertices (denoted by
$\P_{st}$, where $s$ and $t$ are the prescribed endvertices), (simple)
circuits (denoted by $\C$: by that we mean a closed walk of length
at least 2, without edge- and node-repetition), forests ($\F$), spanning trees
($\spt$), (not necessarily spanning) trees ($\T$), and cuts (denoted
by $\cut$). 
Let $G=(V,E)$ be a \textbf{connected} undirected graph (we
assume connectedness in order to avoid trivial case-checkings) and
$\A$ and $\B$ two (not necessarily different) object types from the 7
possibilities above. The general questions we ask are the following:
\begin{itemize}
\item \textbf{Packing problem} (denoted by $\A \d \B$): can we \textbf{find two edge-disjoint subgraphs} in $G$, one of type $\A$ and the other of type $\B$?
\item \textbf{Covering problem} (denoted by $\A \c \B$): can we \textbf{cover the edge set} of $G$ with an object of type $\A$ and an object of type $\B$?
\item \textbf{Partitioning problem} (denoted by $\A \u \B$): can we \textbf{partition the edge set} of $G$ into an object of type $\A$ and an object of type $\B$?
\end{itemize}

Let us give one example of each type. 
A typical partitioning problem is the following: decide whether the
edge set of $G$ can be partitioned into a spanning tree and a
forest. Using our notations this is Problem $\spt \u \F$.  This problem
is in \textbf{NP $\cap$ co-NP} by the results of Nash-Williams \cite{nw},
polynomial algorithms for deciding the problem were given by Kishi and
Kajitani \cite{kishi3}, and Kameda and Toida \cite{kameda}.

A typical packing problem is the following: given four (not
necessarily distinct) vertices $s,t,s',t'\in V$, decide whether there
exists an $s$-$t$ path $P$ and an $s'$-$t'$-path $P'$ in $G$, such that
$P$ and $P'$ do not share any edge. With our notations this is Problem
$\P_{st} \d \P_{s't'} $. This 
problem is still
solvable in polynomial time, as was shown by Thomassen
\cite{thomassen} and Seymour \cite{seymour}.

A typical covering problem is the following: decide whether the edge
set of $G$ can be covered by a path and a 
circuit. In our notations this is Problem $P\c C$. Interestingly we found
that this simple-looking problem is NP-complete.

Let us introduce the following short formulation for the partitioning
and covering problems. If the edge set of a graph $G$ can be
partitioned into a type $A$ subgraph and a type $B$ subgraph then we
will also say that \textbf{\boldmath the edge set of $G$ is $A\u
  B$}. Similarly, if there is a solution of Problem $A\c B$ for a
graph $G$ then we say that \textbf{\boldmath the edge set of $G$ is
  $A\c B$}.

\begin{table}[!ht]
\begin{center}
  \caption{25 PARTITIONING PROBLEMS}
\label{tab:part}
\medskip

\begin{tabular}{|c|c|l|l|}
\hline
\textbf{Problem}&\textbf{Status}&\textbf{Reference}&\textbf{Remark}\\ \hline\hline
$\P \u \P$&\NPC&Theorem \ref{thm:part} & \NPC for subquadratic planar\\\hline
$\P \u \pst$&\NPC&Theorem \ref{thm:part}&  \NPC for subquadratic planar\\\hline
$\P \u \C$&\NPC&Theorem \ref{thm:part}&  \NPC for subquadratic planar\\ \hline
$\P \u \T$&\NPC&  Theorem \ref{thm:part} 
 & \NPC for subquadratic planar \\ \hline
$\P \u \spt$&\NPC & Theorem \ref{thm:part} 
&\NPC for subquadratic planar \\ \hline
$\P \u \F$&\NPC &Theorem \ref{thm:cut} (and Theorem \ref{thm:part}) &\NPC for subcubic planar \\ \hline
$\pst \u \P_{s't'}$&\NPC&Theorem \ref{thm:part}& \NPC for subquadratic planar\\\hline
$\pst \u \C$&\NPC&Theorem \ref{thm:part}& \NPC for subquadratic planar\\\hline
$\pst \u \T$&\NPC & Theorem \ref{thm:part} 
& \NPC for subquadratic planar\\ \hline
$\pst \u \spt$&\NPC& Theorem \ref{thm:part} 
&  \NPC for subquadratic planar\\ \hline
$\pst \u \F$&\NPC  &Theorem \ref{thm:cut}  (and Theorem \ref{thm:part})& \NPC for subcubic  planar\\ \hline
$\C \u \C$&\NPC&Theorem \ref{thm:part}& \NPC for subquadratic planar \\ \hline
$\C \u \T$&\NPC  & Theorem \ref{thm:part} 
&\NPC for subquadratic planar \\ \hline
$\C \u \spt$&\NPC & Theorem \ref{thm:part} 
& \NPC for subquadratic planar\\ \hline
$\C \u \F$&\NPC  &Theorem \ref{thm:cut}  (and Theorem \ref{thm:part}) & 
\NPC for  subcubic planar\\ \hline
$\T \u \T$&\NPC & P\'alv\"olgyi \cite{Dome} & planar graphs? \\ \hline
$\T \u \spt$&\NPC & Theorem \ref{thm:3} 
&  planar graphs? \\ \hline
$\F \u \F$&\inP & Kishi and Kajitani \cite{kishi3}, & in \inP for matroids:\\
 & &   Kameda and Toida \cite{kameda} & Edmonds  \cite{edmonds}\\
&&  (Nash-Williams \cite{nw}) & \\ \hline
$\spt \u \spt$&\inP &Kishi and Kajitani \cite{kishi3}, & in \inP for matroids:\\
 & &   Kameda and Toida \cite{kameda}, &  Edmonds  \cite{edmonds}\\
 & &  (Nash-Williams \cite{nw61}, &\\
 & &  Tutte \cite{tutte}) & \\ \hline
$\cut \u \cut$&\inP&  \iff bipartite &\\
& &(and $|V|\ge 3$) & \\ \hline
$\cut \u \F$&\NPC& Theorem \ref{thm:cut+F} & planar graps?  \\ \hline
$\cut \u \C$&\NPC&Theorem \ref{thm:cut}& \NPC for subcubic planar  \\ \hline
$\cut \u \T$&\NPC&Theorem \ref{thm:cut}& \NPC for subcubic planar  \\ \hline
$\cut \u \P$&\NPC&Theorem \ref{thm:cut}& \NPC for subcubic planar  \\ \hline
$\cut \u \pst$&\NPC&Theorem \ref{thm:cut}& \NPC for subcubic planar  \\ \hline
\end{tabular}
\end{center}
\end{table}

The setting outlined above gives us 84 problems. Note however that
some of these can be omitted. For example $\P\d \A$ is trivial for each
possible type $A$ in question, because $P$ may consist of only one vertex.  By the same reason, $\T\d \A$ and $\F\d \A$ type problems are also
trivial.  Furthermore, observe that the edge-set $E(G)$ of a graph $G$
is $\F\u \A $ $\Leftrightarrow$ $E(G)$ is $ \F\c \A$ $ \Leftrightarrow$
$E(G)$ is $ \T\c \A \Leftrightarrow$ $E(G)$ is $ \spt \c \A$:
therefore we will only consider the problems of form $\F\u \A$ among
these for any $\A$. Similarly, the edge set $E(G)$ is $\F\u \F $
$\Leftrightarrow $ $E(G)$ is $\T\u \F $ $\Leftrightarrow$ $E(G)$ is $
\spt \u \F$:
again we choose to deal with $\F\u \F$.
We can also omit the problems $\cut \u \spt$ and $\cut \d \spt$ because a
cut and a spanning tree can never be disjoint.

The careful calculation gives that we are left with 44 problems.  We
have investigated the status of these. Interestingly, many of these
problems turn out to be NP-complete. Our results are summarized in
Tables \ref{tab:part}-\ref{tab:cov}.  We
note that in our NP-completeness proofs we always show that the
considered problem is NP-complete even if the input graph is simple.
On the other hand, the polynomial algorithms given here always work
also for multigraphs (we allow parallel edges, but we forbid loops).
Some of the results shown in the tables were already proved in the
preliminary version \cite{quickpf} of this paper: namely we have
already shown the NP-completeness of
Problems $\P \u \T$, 
$\P \u \spt$, 
$\pst \u \T$, 
$\pst \u \spt$, 
$\C \u \T$,
$\C \u \spt$, 
$\T \u \spt$,
$\pst \d \spt$,
and $\C \d \spt$ there.

\begin{table}[!ht]
\begin{center}
  \caption{9 PACKING PROBLEMS}
\label{tab:pack}
\medskip

\begin{tabular}{|c|c|l|l|}
\hline
\textbf{Problem}&\textbf{Status}&\textbf{Reference}&\textbf{Remark}\\ \hline\hline
$\pst \d \P_{s't'}$&\inP &Seymour \cite{seymour}, Thomassen \cite{thomassen} & \\ \hline
$\pst \d \C$&\inP & see Section \ref{sec:alg1} & \\ \hline
$\pst \d \spt$&\NPC &Theorem \ref{thm:3} 
& planar graphs?\\ \hline
$\C \d \C$&\inP & Bodlaender \cite{bodl}  &  \NPC in linear matroids\\
&&(see also Section \ref{sec:alg2})& (Theorem \ref{thm:CdC}) \\ \hline
$\C \d \spt$&\NPC &Theorem \ref{thm:3} 
& polynomial in \\
&&& planar graphs, \cite{marcin}\\ \hline
$\spt \d \spt$&\inP &Imai \cite{imai}, (Nash-Williams \cite{nw61},&   in \inP for matroids: \\
& &  Tutte \cite{tutte}) &  Edmonds  \cite{edmonds}\\ \hline
$\cut \d \cut$&\inP&always, if $G$ has two   &   \NPC in linear matroids\\
& &non-adjacent vertices & (Corollary \ref{cor:CutdCut})\\ \hline
$\cut \d \pst$&\inP&always, except if the graph is an & \\
& & $s$-$t$ path (with multiple copies &  \\
& &  for some edges)&\\ \hline
$\cut \d \C$&\inP&always, except if the graph is  &  \NPC in linear matroids  \\
& & a tree, a circuit, or a bunch of  &   ($\Leftrightarrow$ the matroid
is not\\
&&parallel edges & uniform, Theorem \ref{thm:CutdC})\\ \hline
\end{tabular}
\end{center}
\end{table}

\begin{table}[!ht]
\begin{center}
  \caption{10 COVERING PROBLEMS}
\label{tab:cov}
\medskip

\begin{tabular}{|c|c|l|l|}
\hline
\textbf{Problem}&\textbf{Status}&\textbf{Reference}&\textbf{Remark}\\ \hline\hline
$\P \c \P$&\NPC&Theorem \ref{thm:part} & \NPC for subquadratic planar\\ \hline
$\P \c \pst$&\NPC&Theorem \ref{thm:part} & \NPC for subquadratic planar\\ \hline
$\P \c \C$&\NPC&Theorem \ref{thm:part} & \NPC for subquadratic planar\\ \hline
$\pst \c \P_{s't'}$&\NPC&Theorem \ref{thm:part} & \NPC for subquadratic planar\\ \hline
$\pst \c \C$&\NPC&Theorem \ref{thm:part}&\NPC for subquadratic planar \\ \hline
$C \c \C$&\NPC&Theorem \ref{thm:part} & \NPC for subquadratic planar\\ \hline
$\cut \c \cut$&\NPC& \iff 4-colourable & always in planar \\
 & &  & Appel et al.  \cite{4szin}, \cite{4szin2} \\ \hline
$\cut \c \C$&\NPC& Theorem \ref{thm:cut} &  \NPC for subcubic planar \\ \hline
$\cut \c \P$&\NPC& Theorem \ref{thm:cut}&   \NPC for subcubic  planar\\ \hline
$\cut \c \pst$&\NPC& Theorem \ref{thm:cut}&   \NPC for  subcubic planar\\ \hline
\end{tabular}
\end{center}
\end{table}


Problems $\P_{st}\u \spt$ and $\T\u \spt$ were posed in the open
problem portal called ``EGRES Open'' \cite{egresopen} of the Egerv\'ary
Research Group.  Most of the NP-complete problems remain NP-complete
for planar graphs, though we do not know yet the status of Problems $\T
\u \T$, $\T \u \spt$, $\cut \u \F$ $\pst \d \spt$, and $\C \d \spt$ for
planar graphs, as indicated in the table.

We point out to an interesting phenomenon: planar duality and the
NP-completeness of Problem $\C \u \C$ gives that deciding whether the
edge set of a planar graph is the disjoint union of two \emph{simple}
cuts is NP-complete (a \textbf{simple cut}, or \textbf{bond} of a
graph is an inclusionwise minimal cut). In contrast, the edge set of a
graph is $\cut \u \cut$ \iff\ the graph is bipartite on at least 3
nodes\footnote{It is easy to see that the edge set of a connected
  bipartite graph on at least 3 nodes is $\cut \u \cut$. On the other
  hand, the intersection of a cut and a circuit contains an even
  number of edges, therefore the edge set of a non-bipartite graph
  cannot be $\cut \u \cut$.}, that is $\cut \u \cut$ is polinomially
solvable even for non-planar graphs.

Some of the problems can be formulated as a problem in the graphic
matroid and therefore also have a natural matroidal generalization. 
For example the matroidal generalization of $\C\d \C$ is the following:
can we find two disjoint circuits in a matroid (given with an independence
oracle, say)?
Of course, such a matroidal question is only interesting here if it
can be solved for graphic matroids in polynomial time. Some of these
matroidal questions is known to be solvable (e.g., the matroidal
version of $\spt \u \spt$), and some of them was unknown (at least for
us): the best example being the (above mentioned) matroidal version of
$\C \d \C$.
In the table above we indicate these matroidal generalizations, too,
where the meaning of the problem is understandable. The matroidal
generalization of spanning trees, forests, circuits is
straightforward. We do not want to make sense of trees, paths, or
$s$-$t$-paths in matroids.  On the other hand, cuts deserve some
explanation. In matroid theory, a \textbf{cut} (also called
\textbf{bond} in the literature) of a matroid is defined as an
inclusionwise minimal subset of elements that intersects every
base. In the graphic matroid this  corresponds to a simple cut of
the graph defined above. 
So we will only consider
packing problems for cuts in matroids: for example the problem of type
$\A\d \cut$ in graphs is equivalent to the problem of packing $A$ and a
simple cut in the graph, therefore the matroidal generalization is
understandable.  We will discuss these matroidal generalizations in
Section \ref{sec:matroid}.

\section{NP-completeness proofs}\label{sect:npc}

\newcommand{\planarregNPC}{{\sc Planar3\-Reg\-Ham}}
\newcommand{\planarregNPCe}{{\sc Planar3\-Reg\-Ham}$-e$}

A graph $G=(V,E)$ is said to be \textbf{subcubic} if $d_G(v)\le 3$ for
every $v\in V$.  In many proofs below we will use Problem
\planarregNPC\ and Problem \planarregNPCe\ given below.

\begin{prob}[\planarregNPC]
  Given a $3$-regular planar graph $G=(V,E)$,
decide whether there is a \Hcs in $G$.
\end{prob}

\begin{prob}[\planarregNPCe]
  Given a $3$-regular planar graph $G=(V,E)$ and an edge $e\in E$,
decide whether there is a \Hcs in $G$ through edge $e$.
\end{prob}

It is well-known that Problems \planarregNPC\ and \planarregNPCe\ are
NP-complete (see Problem [GT37] in \cite{gj}).


\subsection{NP-completeness proofs for subcubic planar graphs
}\label{sec:cut+c}

\begin{figure}[!ht]
\begin{center}
\input{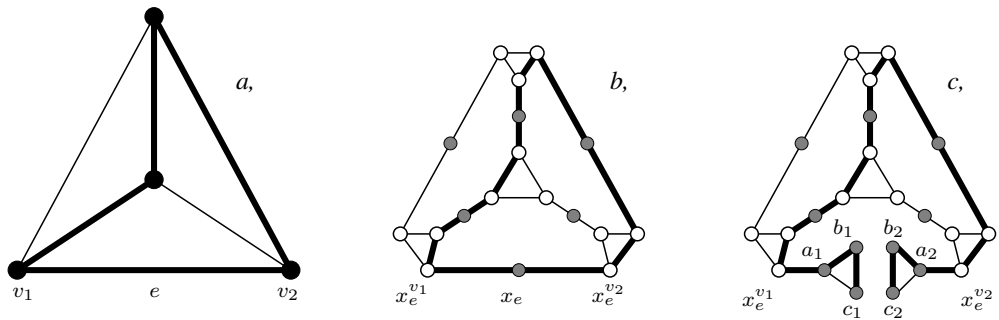}
\caption{An illustration for the proof of Theorem \ref{thm:cut}.}
\label{fig:k4min}
\end{center}
\end{figure}

\begin{thm}\label{thm:cut}
The following problems are NP-complete, even if restricted to subcubic
planar graphs: $\cut \c \C$, $\cut \u \C$, $\C\u \F$, $\cut \c \P$,
$\cut \c \pst$, $\cut \u \P$, $\cut \u \pst$, $\cut \u \T$, $\P\u \F$,
$\pst\u \F$.
\end{thm}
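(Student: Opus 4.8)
All ten problems will be shown NP-complete by polynomial-time reductions from \planarregNPC\ (and from \planarregNPCe\ for the variants involving $\pst$); membership in NP is immediate in every case, since a purported partition/cover can be checked in polynomial time. The core of the argument is a single family of planar subcubic gadgets --- the one sketched in Figure~\ref{fig:k4min} --- together with small modifications of it tailored to the two object types occurring in each problem.

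Given a $3$-regular planar graph $G=(V,E)$, I would build a subcubic planar graph $G'$ as follows: keep every vertex of $G$; subdivide each edge $e=v_1v_2$ into a path $v_1\,x_e^{v_1}\,x_e^{v_2}\,v_2$ (so each original vertex keeps degree $3$, and $x_e^{v_1},x_e^{v_2}$ have degree~$2$ for now), writing $x_e$ for the middle edge $x_e^{v_1}x_e^{v_2}$; then attach to each of $x_e^{v_1}$ and $x_e^{v_2}$ the small ``blocking'' piece of Figure~\ref{fig:k4min} (the two triangles joined through the vertices $a_i,b_i,c_i$), which raises their degree to~$3$, keeps the graph planar and subcubic, and does not interfere with the rest of $G'$. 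The blocking gadget on the strand of $e$ is designed to have exactly two admissible local states for whichever partition/cover we are looking for --- one to be read as ``$e$ is selected'', one as ``$e$ is not selected'' --- and, crucially, to impose a parity-type constraint forcing that at each original vertex $v$ exactly two of its three incident strands are selected. Summing over $V$, the selected strands then form a $2$-regular subgraph of $G$, and the global connectivity requirement on the object type (one circuit, one path, one tree) forces that subgraph to be connected and spanning, i.e.\ a Hamiltonian circuit of $G$; conversely a Hamiltonian circuit is easily converted into a valid partition/cover of $G'$. Hence $G'$ admits the desired partition/cover iff $G$ is Hamiltonian.

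I would carry this out first for one representative problem, say $\cut\u\C$: a partition of $E(G')$ into a cut $\delta(S)$ and a simple circuit $C$ means $C=E(G')\setminus\delta(S)$ is $2$-regular and connected on the vertices it meets, with every other vertex having all of its edges crossing between $S$ and $V(G')\setminus S$; analyzing the gadget shows it admits exactly the two announced states, and consistency at the degree-$2$ subdivision vertices together with connectivity of $C$ pins $C\cap E(G)$ down to a Hamiltonian circuit. For $\C\u\F$ I would use the same $G'$: removing a forest from $E(G')$ must leave a single circuit that meets the cycle produced inside each gadget, which again localizes to the two states. For the $\P$ and $\pst$ variants I would instead start from \planarregNPCe\ and add a small ``opener'' at the prescribed edge, taking $s,t$ to be its endpoints, so that the selected strands are forced to form a Hamiltonian \emph{path} between $s$ and $t$. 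For the covering versions $\cut\c\C$, $\cut\c\P$, $\cut\c\pst$ I would argue that in $G'$ any feasible cover is automatically a partition --- the set $S$ defining the cut is essentially forced, and no circuit/path edge can be added to $\delta(S)$ without altering the $S$/$\overline S$ split (recall a circuit and a cut meet in an even number of edges) --- so these covering problems collapse onto the partitioning problems already handled. The remaining $\cut\u\T$, $\P\u\F$, $\pst\u\F$ follow the same template with the obvious adjustment of the ``selected state'' and of the global acyclicity/connectivity bookkeeping.

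The main obstacle, where essentially all the technical work lies, is the design and verification of the blocking gadget: it must be planar and subcubic, must have exactly two admissible local states for each object type ($\C$, $\F$, $\T$, $\P$, $\pst$) that can appear opposite a cut, and for $\C$ opposite $\F$, and it must enforce the ``exactly two of the three strands at $v$'' condition uniformly. The delicate direction is the ``only if'' part --- showing that \emph{every} feasible partition or cover of $G'$ genuinely restricts to a Hamiltonian circuit (or path) of $G$, with no parasitic solutions sneaking through the gadgets --- and this requires a careful finite case analysis of the gadget's behaviour.
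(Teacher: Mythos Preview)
Your proposal misreads the construction. In the paper the key gadget is not attached to the subdivision vertices along each edge; rather, each original vertex $v$ of $G$ is \emph{replaced by a triangle}: for the three edges $e,f,g$ incident to $v$ one introduces subdivision vertices $x_e^v,x_f^v,x_g^v$, joins them pairwise, and deletes $v$ (and a further subdivision node $x_e$ is placed in the middle of each original edge). The pendant triangles on $a_i,b_i,c_i$ that you picked out of Figure~\ref{fig:k4min} belong only to the \emph{second} construction $G''$, and there they are attached only at the single distinguished edge $e$, in order to pin down the two endpoints of the path in the $\P/\pst$ variants; they are not attached everywhere.

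This is not a cosmetic difference --- your construction, as described, does not work. A pendant triangle is attached to the rest of the graph by a bridge, so no circuit $C$ can enter it; hence for $\cut\u\C$ all three of its edges would have to lie in the cut, but an odd cycle cannot be contained in a cut. Your $G'$ would therefore \emph{never} admit a $\cut\u\C$ partition (and likewise never be $\C\u\F$ or $\cut\c\C$), regardless of whether $G$ is Hamiltonian. More fundamentally, an edge-local gadget with two independent ``selected/not selected'' states cannot by itself enforce the vertex-level condition ``exactly two of the three strands at $v$ are selected''; that condition is exactly what the vertex-triangle delivers, because the triangle is an odd cycle, so the circuit (respectively, the edge set that must meet every odd cycle / every cycle) is forced to traverse two of its three sides, hence two of the three strands at $v$. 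Once you use the vertex-triangle construction for $G'$, the chain $G$ Hamiltonian $\Leftrightarrow E(G')$ is $\C\c\cut\Leftrightarrow\C\u\cut\Leftrightarrow\C\u\F$ goes through cleanly, and for the remaining seven problems one passes to $G''$ by cutting the distinguished edge and hanging the two pendant triangles there to force the path/tree to terminate at $s=c_1$, $t=c_2$.
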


\begin{proof}
All the problems are clearly in NP.  First we prove the completeness
of $\cut \c \C$, $\cut \u \C$ and $\C \u \F$ using a reduction from Problem
\planarregNPC.
Given an instance of the Problem \planarregNPC\ with the 3-regular
planar graph $G$, construct the following graph $G'$.  First subdivide
each edge $e=v_1v_2\in E(G)$ with 3 new nodes $x_e^{v_1},x_e,x_e^{v_2}$ such that
they form a path in the order $v_1,x_e^{v_1},x_e,x_e^{v_2},v_2$. Now for any node
$u\in V(G)$ and any pair of edges $e,f\in E(G)$ incident to $u$
connect $x_e^u$ and $x_f^u$ with a new edge. Finally, delete all the
original nodes $v\in V(G)$ to get $G'$.
Informally speaking, $G'$ is obtained from $G$ by blowing a triangle
into every node of $G$ and subdividing each original edge with a new
node: see Figure \ref{fig:k4min} \textit{a,b,} for an illustration. Note that by
contracting these triangles in $G'$ and undoing the subdivision
vertices of form $x_e$ gives back $G$.
Clearly, the resulting graph $G'$ is still planar and has maximum
degree 3 (we mention that the subdivision nodes of form $x_e$ are only needed for
the Problem $\cut \u \C$).  
We will prove that $G$ contains a \Hcs
\iff $G'$ contains a circuit covering odd circuits (i.e., the edge-set of
$G'$ is $\C\c \cut$)
\iff the edge-set of $G'$ is $\C\u \cut$
\iff $G'$ contains a circuit covering all the circuits (i.e., the edge
set of $G'$ is $\C\u \F$).  First let $C$ be a Hamiltonian circuit in
$G$. We define a circuit $C'$ in $G'$ as follows. For any $v\in V(G)$,
if $C$ uses the edges $e, f$ incident to $v$ then let $C'$ use the 3
edges $x_ex_e^v, x_e^vx_f^v, x_f^vx_f$ (see Figure \ref{fig:k4min} \textit{a,b,} for an
illustration).  Observe that $G'-C'$ is a forest, proving that the
edge-set of $G'$ is $\C\u \F$. Similarly, the edge set of $G'-C'$ is a
cut of $G'$, proving that the edge-set of $G'$ is $\C\u \cut$. Finally
we show that if the edge set of $G'$ is $\C\c \F$ then $G$ contains a
Hamiltonian circuit: this proves the sequence of equivalences stated
above (the remaining implications being trivial). Assume that $G'$ has
a circuit $C'$ that intersects the edge set of every odd circuit of
$G'$. Contract the triangles of $G'$ and undo the subdivision nodes of
form $x_e$ and observe that $C'$ becomes a Hamiltonian circuit of $G$.

For the rest of the problems we use \planarregNPCe.  Given
the 3-regular planar graph $G$ and an edge
$e=v_1v_2\in E(G)$, first construct the graph $G'$ as above. Next modify $G'$
the following way: if $x_e^{v_1},x_e,x_e^{v_2}$ are the nodes of $G'$ arising from the subdivision of the original edge $e\in E(G)$ 
then let $G''=(G'-x_e)+\{x_e^{v_i}a_i, a_ib_i, b_ic_i, c_ia_i: i=1,2\}$,
where $a_i,b_i,c_i (i=1,2)$ are 6 new nodes (informally, ``cut off'' the path
$x_e^{v_1},x_e,x_e^{v_2}$ at $x_e$ and substitute
the arising two vertices of degree 1 with two triangles). 
An illustration can be seen in Figure \ref{fig:k4min} \textit{a,c}.

Let $s=c_1$ and $t=c_2$. 
The following chain of equivalences settles the NP-completeness of the
rest of the problems promised in the theorem. The proof is similar to
the one above and is left to the reader.


There exists a \Hcs in $G$ using the edge $e$ $\Leftrightarrow$ the
edge set of $G''$ is $\cut \u \pst$ $\Leftrightarrow$ the edge set of
$G''$ is $\cut \u \P$ $\Leftrightarrow$ the edge set of $G''$ is $\cut \u
T$ $\Leftrightarrow$ the edge set of $G''$ is $\cut \c \pst$
$\Leftrightarrow$ the edge set of $G''$ is $\cut \c \P$
$\Leftrightarrow$ the edge set of $G''$ is $\pst \u \F$
$\Leftrightarrow$ the edge set of $G''$ is $\P \u \F$.
\end{proof}

\subsection{NP-completeness proofs based on Kotzig's theorem}

Now we prove the NP-completeness of many other problems in our
collection using
the following elegant result
proved by Kotzig \cite{kotzig}.

\begin{thm}\label{thm:kotzig}
A 3-regular graph contains a Hamiltonian circuit if and only if the
edge set of its  line graph can be decomposed into two
Hamiltonian circuits.
\end{thm}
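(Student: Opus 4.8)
The plan is to fix, once and for all, the combinatorial picture of the line graph $L(G)$ of a $3$-regular graph $G$ and then to read the equivalence off that picture. We may assume $G$ is simple and connected; being $3$-regular it then has $n:=|V(G)|\ge 4$ vertices. The facts I would record first are purely structural: $V(L(G))=E(G)$; for each $v\in V(G)$ the three edges of $G$ at $v$ induce a triangle $T_v$ in $L(G)$; these triangles are pairwise edge-disjoint, each edge of $L(G)$ lies in exactly one $T_v$, each vertex of $L(G)$ lies in exactly two of them; and $L(G)$ is simple, $4$-regular, with $3n$ edges. The one lemma I would isolate is: a Hamiltonian circuit $H$ of $L(G)$ that contains all three edges of some $T_v$ must equal $T_v$ (those three edges already give each of the three vertices of $T_v$ degree $2$ in $H$, so $T_v$ is a component of $H$), which is impossible for $n\ge 4$. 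Hence, \emph{whenever} $E(L(G))=H_1\sqcup H_2$ with both $H_i$ Hamiltonian circuits, every triangle $T_v$ is split as $2+1$ between $H_1$ and $H_2$.

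For the easy direction (a Hamiltonian circuit of $G$ yields a decomposition of $L(G)$), I would start from a Hamiltonian circuit $C=e_1e_2\cdots e_n$ of $G$ and its complementary perfect matching $M:=E(G)\setminus E(C)$ (a perfect matching since $G$ is $3$-regular and $C$ is a $2$-regular spanning subgraph). The edges $e_1,\dots,e_n$, in their cyclic order along $C$, form a cycle $C^{*}$ in $L(G)$. Now for each edge of $M$ choose one of its two endpoints; this partitions $V(G)=V_1\sqcup V_2$ with every matching edge having exactly one endpoint in each part. Define $H_1$ by following $C^{*}$, except that at each $v\in V_1$ we replace the direct $C^{*}$-step between the two $C$-edges at $v$ by the detour through the $L(G)$-vertex $m_v\in M$ along its two legs in $T_v$; define $H_2$ symmetrically, detouring at the vertices of $V_2$. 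A short local degree count shows $H_1$ and $H_2$ are edge-disjoint and together use each of the $n$ edges of $C^{*}$ and each of the $2n$ triangle legs exactly once, and that each $H_i$ is $2$-regular; and each $H_i$ is connected because it is the single cycle $C^{*}$ with the $n/2$ distinct matching vertices $\{m_v:v\in V_i\}$ spliced in as detours. So each $H_i$ is a Hamiltonian circuit of $L(G)$.

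For the converse, suppose $E(L(G))=H_1\sqcup H_2$ with both $H_i$ Hamiltonian, so every $T_v$ is split $2+1$; set $A_i=\{v:|E(T_v)\cap H_i|=2\}$, giving $V(G)=A_1\sqcup A_2$. For $v\in A_1$ the two edges of $T_v$ in $H_1$ share a vertex of $L(G)$, i.e.\ a unique edge $\beta(v)\in E(G)$ incident to $v$; for $v\in A_2$ define $\beta(v)$ the same way using $H_2$. The heart of the argument is to show that $M^{*}:=\{\beta(v):v\in V(G)\}$ is a perfect matching of $G$: if $\beta(v)=e=uv$ with $v\in A_1$, then the $L(G)$-vertex $e$ already has both of its $T_v$-edges in $H_1$, hence no $H_1$-edge in $T_u$, hence both of its $T_u$-edges in $H_2$; so $|E(T_u)\cap H_2|\ge 2$, forcing $u\in A_2$ with $\beta(u)=e$. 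Thus $\beta(u)=uv\Leftrightarrow\beta(v)=uv$, so each vertex lies on exactly one edge of $M^{*}$ and $|M^{*}|=n/2$. Finally put $C':=E(G)\setminus M^{*}$, a $2$-regular spanning subgraph of $G$. Unwinding the definitions, $H_1\cap E(T_v)$ is, for $v\in A_2$, exactly the $L(G)$-edge joining the two $C'$-edges at $v$, and, for $v\in A_1$, the two legs joining those two $C'$-edges to $\beta(v)$; in other words $H_1$ is the ``line graph of $C'$'' with each vertex of $M^{*}$ spliced in as a detour. Contracting those detours turns $H_1$ into the disjoint union of the line graphs of the cycles composing $C'$, without changing the number of connected components; since $H_1$ is connected, $C'$ has one component, i.e.\ $C'$ is a Hamiltonian circuit of $G$.

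The main obstacle, and where the content sits, is the converse: getting the ``apex'' edges $\beta(v)$ to assemble into a perfect matching $M^{*}$, and then transporting connectivity from $H_1$ back to $C'$. The first of these is exactly where one must use the $2+1$ splitting of every triangle, which in turn is where one must use that \emph{both} $H_1$ and $H_2$ are Hamiltonian — a single Hamiltonian circuit of $L(G)$ can perfectly well avoid a triangle $T_v$ altogether, so the lemma is genuinely about the decomposition, not about one circuit. By contrast the forward direction is routine once the detour idea is in place, and all the degree bookkeeping there is mechanical. It is worth noting in the write-up that simplicity of $G$ is used in earnest (for a multigraph such as three parallel edges the statement already fails), so the hypothesis $n\ge 4$ is not merely a convenience.
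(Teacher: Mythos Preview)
The paper does not supply its own proof of this statement: Theorem~\ref{thm:kotzig} is quoted as Kotzig's result with a citation to \cite{kotzig} and then used as a black box in the proof of Theorem~\ref{thm:part}. So there is no in-paper argument to compare your proposal against.

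For what it is worth, your argument is correct and is essentially the classical one. The structural picture you set up---the $n$ pairwise edge-disjoint triangles $T_v$ partitioning $E(L(G))$, each $L(G)$-vertex lying in exactly two of them, and the forced $2{+}1$ split of every $T_v$ in any decomposition into two Hamiltonian circuits---is exactly what drives both directions. In the forward direction your ``detour'' construction, splicing the matching vertices into the line-cycle $C^*$ according to a bipartition $V_1\sqcup V_2$ of the matching, is the standard move and your degree/connectivity bookkeeping is sound (the key point being that each matching edge has exactly one endpoint in each $V_i$, so the spliced-in vertices are distinct). In the converse, showing that the apex edges $\beta(v)$ assemble into a perfect matching is the crux, and your argument (degree~$2$ of $\beta(v)$ already used up inside $T_v$, forcing the opposite triangle to carry two $H_2$-edges) is correct; the final step, suppressing the matching vertices in $H_1$ to recover the line graph of $C'$ and transporting connectedness, is also fine. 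Your side remark that simplicity of $G$ is genuinely needed is accurate.
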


This theorem was used to prove NP-completeness results by Pike in 
\cite{pike}. 
Another useful and well known observation is the following: 
the line graph of a planar 3-regular graph is 4-regular and planar.




\begin{thm}\label{thm:part}
  The following problems are NP-complete, even if restricted to subquadratic planar graphs: $\P\u \P$, $\P\u \P_{st}$, $\P\u \C$, $\P\u \T$,
  $\P\u \spt$, $\P\u \F,$ $\P_{st}\u \P_{s't'}, \; \P_{st}\u \C$, $\P_{st}\u
  \F$, $\P_{st}\u \T$, $\P_{st}\u \spt$, $\C\u \C$, $\; \C\u \T$, $\; \C\u
  \spt$, $\C\u \F$, $\P\c \P$, $\P\c \P_{st}$, $\P\c \C$, $\P_{st}\c \P_{s't'}$, $\P_{st}\c \C$, 
$\C\c \C$.
\end{thm}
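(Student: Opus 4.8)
The plan is to reduce from Problems \planarregNPC\ and \planarregNPCe, using Kotzig's theorem (Theorem \ref{thm:kotzig}) together with the fact that the line graph $L(G)$ of a $3$-regular planar graph $G$ is $4$-regular and planar. Fix such a $G$; we may assume $G$ is bridgeless (a graph with a bridge has no Hamiltonian circuit), so that $L(G)$ is simple, planar, $4$-regular and $2$-connected. Put $N=|V(L(G))|=|E(G)|$, so $|E(L(G))|=2N$. The quantitative point driving everything is that in an $N$-vertex graph a path or a circuit has at most $N$ edges and a tree or forest at most $N-1$; hence on a graph with about $2N$ edges a partition --- and, since overlapping edges are wasted, even a covering --- into two such objects must be extremal, after which a single degree count pins the two objects down. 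All graphs used below are obtained from $L(G)$ by deleting one or two edges (the pair chosen incident to a common vertex), so simplicity, planarity, maximum degree $\le 4$ and connectedness are preserved.

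First I would treat the problems in which one object is a circuit. On $L(G)$ itself: if $E(L(G))$ is partitioned into two circuits, then at each (degree-$4$) vertex each of them has degree exactly $2$, so each is a spanning circuit; by Theorem \ref{thm:kotzig} such a partition exists \iff $G$ is Hamiltonian, and a covering of $E(L(G))$ by two circuits is forced to be such a partition. This settles $\C\u\C$ and $\C\c\C$. Now delete one edge $f=pq$ and set $H=L(G)-f$ (so $2N-1$ edges, with $p,q$ of degree $3$). If $E(H)$ is realized as $\P\u\C$, $\pst\u\C$, $\C\u\T$, $\C\u\spt$ or $\C\u\F$, or as one of their coverings, the edge count forces the circuit to be a Hamiltonian circuit $C$ and the second object to have $N-1$ edges, whereupon $E(H)-C$ has degree $2$ everywhere and $1$ at $p,q$ and, being acyclic, is a Hamiltonian $p$-$q$ path $P$; then $L(G)=C\cup(P+f)$ splits into two Hamiltonian circuits and $G$ is Hamiltonian. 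Conversely a Kotzig partition $L(G)=C_1\cup C_2$ with $f\in C_1$ gives $E(H)=(C_1-f)\cup C_2$, a Hamiltonian path and a Hamiltonian circuit, which realizes all of these. For $\P\u\C$ one adds that a degree-$4$ vertex must be internal on the path, so $p,q$ are the only possible path endpoints --- hence $\P\u\C$ equals $\pst\u\C$ with $s=p,t=q$, and again the coverings collapse to partitions.

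For the remaining problems I would reduce from Problem \planarregNPCe\ with distinguished edge $e=uv$. Let $T_v=\{\hat e,\widehat{e'},\widehat{e''}\}$ be the triangle of $L(G)$ coming from $v$ (here $e',e''$ are the other edges of $G$ at $v$), put $r:=\hat e$, and delete the two edges $f_1=\hat e\widehat{e'}$ and $f_2=\hat e\widehat{e''}$ of $T_v$ at $r$; let $H=L(G)-f_1-f_2$ ($2N-2$ edges, $r$ of degree $2$, $\widehat{e'},\widehat{e''}$ of degree $3$). In any partition of $E(H)$ into two paths (with or without prescribed endpoints), or into a path or $s$-$t$ path together with a tree, spanning tree or forest, the edge count forces both objects to have $N-1$ edges, and the degree count (degree-$4$ vertices internal on both, $r$ forced to be an endpoint of both, no other endpoints available) forces them to be two edge-disjoint Hamiltonian paths that share the endpoint $r$ and have their other endpoints at $\widehat{e'}$ and $\widehat{e''}$. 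Adding $f_1,f_2$ back closes these into two edge-disjoint Hamiltonian circuits of $L(G)$ with $f_1,f_2$ in different parts; reading off, at each vertex $w$ of $G$, the pair of $G$-edges at $w$ joined by the \emph{unique} edge of the triangle $T_w$ belonging to only one of the two circuits --- Kotzig's correspondence between such partitions of $L(G)$ and Hamiltonian circuits of $G$ --- yields a Hamiltonian circuit of $G$ which at $v$ uses $e$ (the singleton edge of $T_v$ is $f_1$ or $f_2$, both incident to $\hat e$). Conversely, if $G$ has a Hamiltonian circuit through $e$, then in the Kotzig partition of $L(G)$ built from it one part crosses $T_v$ directly and the other detours through the chord, so $f_1,f_2$ lie in different parts and deleting them exhibits the desired partition of $E(H)$. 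Prescribing $s,t$ (and $s',t'$) among $\{r,\widehat{e'},\widehat{e''}\}$ is legitimate precisely because these are the forced endpoints, and the covering versions $\P\c\P$, $\P\c\pst$, $\pst\c\P_{s't'}$ again reduce to their partition versions. (The problems $\C\u\F$, $\P\u\F$, $\pst\u\F$ are in addition already NP-complete for subcubic planar graphs by Theorem \ref{thm:cut}.)

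The step I expect to need genuine care --- rather than bookkeeping --- is the correctness of this last reduction: one must know that \emph{every} partition of $L(G)$ into two Hamiltonian circuits arises from a Hamiltonian circuit of $G$ by Kotzig's construction and that the "singleton triangle edge" rule recovers it, so that a partition \emph{separating} $f_1$ and $f_2$ corresponds exactly to a Hamiltonian circuit \emph{through $e$}. This is essentially contained in the proof of Theorem \ref{thm:kotzig}; the one new idea is to choose $f_1,f_2$ as the two edges of a single triangle $T_v$ at a common vertex $r$, which makes "separated by the partition" coincide with "uses $e$" and at the same time forces $r$ to be a shared endpoint of the two objects in the backward direction. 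Everything else is a uniform chain of equivalences plus the degree count above, and planarity and $\Delta\le 4$ are preserved throughout.
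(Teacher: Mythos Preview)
Your overall strategy matches the paper's: Kotzig's theorem plus the $4$-regular planar line graph, followed by edge deletions and a degree/edge count. For $\C\u\C$, $\C\c\C$ and the problems $\C\u A$ (and their covers) on $L(G)-f$, your argument is essentially identical to the paper's.

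The substantive difference is in the gadget for the remaining path-type partition problems. The paper first modifies $G$: it deletes $e=xy$, adds new vertices $x',y'$, edges $xx',yy'$ and two parallel edges between $x'$ and $y'$. In $L(G')$ this produces two parallel edges $g_1,g_2$ between two vertices $s,t$, and $L'=L(G')-g_1-g_2$ has exactly two non-degree-$4$ vertices, both of degree~$2$. You instead stay in $L(G)$ and delete two edges $f_1,f_2$ of a single triangle $T_v$ at a common vertex $r$; your $H$ then has one vertex of degree~$2$ and two vertices of degree~$3$.

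This difference is harmless for $\P\u\P$, $\P\u\P_{st}$, $\P_{st}\u\P_{s't'}$ and their covers, since when both objects are paths your degree count goes through. It also works for the $\P_{st}\u\T$, $\P_{st}\u\spt$, $\P_{st}\u\F$ cases once you prescribe $s=r$ and $t\in\{\widehat{e'},\widehat{e''}\}$. But it leaves a genuine gap for $\P\u\T$ and $\P\u\spt$ (unprescribed path plus tree-type object). Your sentence ``degree-$4$ vertices internal on both \ldots\ forces them to be two edge-disjoint Hamiltonian paths'' is not justified there: at a degree-$4$ vertex $v$ the split $d_P(v)=1$, $d_T(v)=3$ is perfectly compatible with $P$ a Hamiltonian path and $T$ a spanning tree, so the second endpoint of $P$ is not pinned to $\{\widehat{e'},\widehat{e''}\}$ and $T$ is not forced to be a path. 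In the paper's $L'$ this cannot happen, because the only low-degree vertices have degree~$2$: if a Hamiltonian path ended at a degree-$4$ vertex, the complementary spanning tree would have degree~$0$ at $s$ or $t$. That is precisely what the paper's extra gadget buys; your construction, with its two degree-$3$ vertices, does not.

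A second, smaller point: your reduction for the path-type problems relies on the bijective form of Kotzig's theorem (every $2$-Hamiltonian decomposition of $L(G)$ comes from a Hamiltonian circuit of $G$, and separating $f_1,f_2$ corresponds to using $e$). The paper avoids this entirely: in $L(G')$ the two deleted edges are parallel, so in \emph{any} decomposition into two Hamiltonian circuits they are automatically in different parts, and only the existence statement of Theorem~\ref{thm:kotzig} is needed.
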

\begin{proof}
The problems are clearly in NP. Let $G$ be a planar 3-regular
graph. Since $L(G)$ is 4-regular, it is decomposable to two circuits
\iff it is decomposable to two \Hc s. This together with Kotzig's
theorem shows that $\C\u \C$ is NP-complete. For every other problem of
type $\C\u \A$ use $L=L(G)\!-\!st$ with an arbitrary edge $st$ of $L(G)$.
Let $C$ be a circuit of $L$ and observe that (by the number of edges
of $L$ and the degree conditions) $L\!-\!C$ is circuit-free if and only if
$C$ is a \Hcs and $L\!-\!C$ is a Hamiltonian path connecting $s$ and $t$.

For the rest of the partitioning type problems we need one more trick.  Let us be given
a $3$-regular planar graph $G=(V,E)$ and an edge $e=xy\in E$. We
construct another $3$-regular planar graph $G'=(V',E')$ as
follows. Delete edge $xy$, add vertices $x', y'$, and add edges $xx',
yy'$ and add two parallel edges between $x'$ and $y'$, namely $e_{xy}$
and $f_{xy}$ (note that $G'$ is planar, too).  Clearly $G$ has a \Hcs
through edge $xy$ if and only if $G'$ has a \Hc. Now consider $L(G')$,
the line graph of $G'$, it is a $4$-regular planar graph.
Note, that in $L(G')$ there are two parallel edges
between nodes $s=e_{xy}$ and $t=f_{xy}$, call these edges $g_1$ and
$g_2$. Clearly, $L(G')$ can be decomposed into two \Hc s \iff
$L'=L(G')\!-\!g_1\!-\!g_2$ can be decomposed into two \Hp s. Let $P$ be a path
in $L'$ and notice again that the number of edges of $L'$ and the
degrees of the nodes in $L'$ imply that $L'\!-\!P$ is circuit free
\iff\ $P$ and $L'\!-\!P$ are two \Hp s in $L'$.

Finally, the NP-completeness of the problems of type $\A\c \B$ is an
easy consequence of the NP-completeness of the corresponding
partitioning problem $\A\u \B$: use the same construction and observe
that the number of edges enforce the two objects in the cover to be
disjoint.
\end{proof}

We remark that the above theorem gives a new proof of the
NP-completeness of Problems $\C\u \F$, $\P\u \F$ and $\pst\u \F$, already
proved in Theorem \ref{thm:cut}.

\subsection{NP-completeness of Problems $\P_{st}\d \spt$,   $\T\u \spt$, $\C \d \spt$, and $\cut \u \F$}

First we show the NP-completeness of Problems $\P_{st}\d \spt$, $\T\u
\spt$, and $\C \d \spt$.  Problem $\T\u \T$ was proved to be NP-complete
by P\'alv\"olgyi in \cite{Dome}
(the NP-completeness of this problem with the additional requirement that
the two trees have to be of equal size was proved by Pferschy,
Woeginger and Yao \cite{woeg}). Our reductions here are similar to the
one used by P\'alv\"olgyi in \cite{Dome}. We remark that our first
proof for the NP-completeness of Problems $\P \u \T$, $\P \u \spt$, $\pst
\u \T$, $\pst \u \spt$, $\C \u \T$ and $\C \u \spt$ used a variant of the
construction below (this can be found in \cite{quickpf}), but later we
found that using Kotzig's result (Theorem \ref{thm:kotzig}) a simpler
proof can be given for these.

For a subset of edges $E'\subseteq E$ in a graph $G=(V,E)$, let
$V(E')$ denote the subset of nodes incident to the edges of $E'$,
i.e.,  $V(E')=\{v\in V: $ there exists an $f\in E' $ with $v\in f\}$.

\begin{thm}\label{thm:3}
Problems $\P_{st}\d \spt$, $\T \u \spt$ and $\C \d \spt$ are NP-complete
even for graphs with maximum degree 3.
\end{thm}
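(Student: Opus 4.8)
The plan is to reduce from Problem \planarregNPCe\ (Hamiltonian circuit through a prescribed edge in a 3-regular planar graph), which is NP-complete, and to build a single gadget construction that simultaneously handles all three target problems. For the reduction, I would start from a 3-regular graph $G=(V,E)$ and a fixed edge $e=xy\in E$. The key structural idea, following P\'alv\"olgyi's approach, is to attach a small ``pendant'' or ``tester'' gadget to $G$ so that a subgraph of the required type ($s$-$t$ path, spanning tree, or circuit) is forced to traverse $G$ essentially as a Hamiltonian circuit, while the complementary edge set (which must be a spanning tree in the partition/packing problems) is forced to connect up every vertex exactly because the Hamiltonian circuit visits every vertex. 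Concretely: subdivide the edge $e$ into a path through a new vertex $z$, or split $G$ as in Theorem \ref{thm:part} by replacing $e$ with a short parallel-edge gadget, and designate endpoints $s,t$ inside this gadget; the point of the gadget is that any spanning tree $T\subseteq E'$ must leave exactly $|V|$ edges outside $T$ (by counting, since a spanning tree on $n$ vertices has $n-1$ edges), and in a 3-regular graph those leftover edges can form an $s$-$t$ path / circuit covering all vertices only when $G$ itself has a Hamiltonian circuit through $e$.

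For $\C\d\spt$ (equivalently, by the observations in the introduction, the partition version is different, but here we want edge-disjoint): I would argue that $G'$ has an edge-disjoint circuit $C$ and spanning tree $T$ if and only if $C$ can be taken to be a Hamiltonian circuit — because $G'$ has $3n/2$ edges (roughly), a spanning tree uses $n-1$, so the circuit has at most $n/2+1$ edges available, and a circuit in a graph where removing a spanning tree leaves few edges must be ``large'' only in a constrained way; the gadget is designed so this forces Hamiltonicity through $e$. For $\P_{st}\d\spt$ the same counting applies with a path instead of a circuit, and the prescribed endpoints $s,t$ let us pin the path to run through the split-edge gadget exactly once. For $\T\u\spt$ (partition into a tree and a spanning tree), I would again use a counting argument: if $E'$ partitions into a tree $T_1$ and a spanning tree $T_2$, then $|T_2|=n-1$ and $T_1$ has the rest; designing the gadget so that $T_1$ is forced to be a path (a tree of maximum degree $\le 2$) through the tester reduces to the previous case.

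Throughout I would maintain maximum degree $3$: the split-edge gadget ($x\to x'$, $y\to y'$, double edge between $x'$ and $y'$) keeps everything subcubic, and any additional tester vertices can be attached as pendant triangles or short paths that do not raise degrees beyond $3$. The main obstacle I anticipate is verifying the ``only if'' direction cleanly: one must rule out degenerate solutions where, say, the spanning tree in $\C\d\spt$ is chosen cleverly so that the leftover edges form a circuit that is not Hamiltonian but still edge-disjoint from $T$. This is where the precise edge count of the 3-regular host graph, combined with the gadget forcing the circuit/path to enter and leave the gadget in a prescribed way, does the real work — essentially, one shows any valid solution, after contracting or deleting the gadget, restricts to a spanning connected even (resp. path-like) subgraph of $G$ of exactly the right size, which in a cubic graph is a Hamiltonian circuit. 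I would present the three reductions in parallel, doing the circuit case in detail and noting that the path and tree cases are analogous modifications, exactly as the excerpt signals the authors intend ("Our reductions here are similar to the one used by P\'alv\"olgyi").
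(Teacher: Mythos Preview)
Your approach has a fundamental arithmetic gap. You correctly compute that in a (roughly) 3-regular graph $G'$ on $n$ vertices a spanning tree uses $n-1$ edges, leaving about $n/2+1$ edges for the circuit. But this immediately rules out your intended conclusion: a circuit on at most $n/2+1$ edges visits at most $n/2+1$ vertices and so cannot be Hamiltonian. Conversely, if $C$ \emph{were} a Hamiltonian circuit in a 3-regular graph, then $E\setminus C$ would be a perfect matching, which is disconnected for $n>2$ and hence never a spanning tree. So neither direction of ``$\C\d\spt$ holds iff $G$ is Hamiltonian'' can go through by counting, and no local gadget of the kind you describe (subdivisions, pendant triangles, a parallel-edge pair) shifts the edge count enough to close this $n/2$-sized gap. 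The same arithmetic kills $\pst\d\spt$: a Hamiltonian $s$--$t$ path would leave about $n/2+1$ edges, far short of the $n-1$ needed for an edge-disjoint spanning tree.

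There is also an external obstruction specific to $\C\d\spt$: your source problem is \planarregNPCe, and every gadget you list preserves planarity, so your output $G'$ would be planar. But the paper records (Table~\ref{tab:pack} and the paragraph following the proof of Theorem~\ref{thm:3}) that $\C\d\spt$ is polynomially solvable in planar graphs; a planarity-preserving reduction therefore cannot establish its NP-completeness. The paper's actual proof is entirely different: it reduces from \textsc{3SAT} and \textsc{One-In-Three 3SAT}. For a formula $\varphi$ it builds a subcubic graph $G_\varphi$ in which each variable $x_j$ contributes a circuit gadget, each clause $C$ contributes a node $u_C$ with three length-two pendants, and the variable circuits are strung together between terminals $s$ and $t$; an $s$--$t$ path through the variable circuits encodes a truth assignment, and connectivity of the complement encodes clause satisfaction (for $\T\u\spt$, acyclicity of the complement additionally encodes the one-in-three condition). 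The three target problems are then handled by small local modifications of $G_\varphi$. The resulting graphs are subcubic but not planar, consistent with the open status of the planar versions noted in the paper.
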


\begin{proof}
It is clear that the problems are in NP.  Their completeness will be
shown by a reduction from the well known NP-complete problems
\threesat\ or the problem \oneinthree\ (Problems LO2 and LO4 in
\cite{gj}).  Let $\varphi$ be a 3-CNF formula with variable set
$\{x_1,x_2,\dots,x_n\}$ and clause set $\cC=\{C_1,C_2,\dots,C_m\}$
(where each clause contains exactly 3 literals).  Assume that literal
$x_j$ appears in $k_j$ clauses
$C_{a^j_{1}},C_{a^j_{2}},\dots,C_{a^j_{k_j}}$, and literal $\ov{x_j}$
occurs in $l_j$ clauses $C_{b^j_{1}},C_{b^j_{2}},\dots,C_{b^j_{l_j}}$.
Construct the following graph $G_\varphi=(V,E)$. 

For an arbitrary clause $C\in \cC$ we will introduce a new node $u_C$,
and for every literal $y$ in $C$ we introduce two more nodes $v(y,C), w(y,C)$.
Introduce the edges $u_Cw(y,C), w(y,C)v(y,C)$
for every clause $C$ and every literal $y$ in $C$ 
(the nodes $w(y,C)$ will have degree 2).

For every variable $x_j$ introduce 8 new nodes $z^j_1, z^j_2,$ $ w^j_1,
\ov{w^j_1},$ $w^j_2,$ $ \ov{w^j_2},$ $w^j_3, \ov{w^j_3}$.  For every
variable $x_j$, let $G_\varphi$ contain a circuit on the $k_j+l_j+4$
nodes $z^{j}_1,$ $ v(x_j,C_{a^j_{1}}),$ $v(x_j,C_{a^j_{2}}),$ $\dots,$ $
v(x_j,C_{a^j_{k_j}}),$ $ w^j_1, $ $z^j_2,$ $ \ov{w^j_1},$ $
v(\ov{x_j},C_{b^j_{l_j}}),$ $v(\ov{x_j},C_{b^j_{l_j-1}}), $ $\dots,$ $
v(\ov{x_j},C_{b^j_{1}})$ in this order.
We say that this circuit is \textbf{ associated to variable
  $x_j$}. Connect the nodes $z^j_2$ and $z^{j+1}_1$ with an edge for
every $j=1,2,\dots,n\!-\!1$.  Introduce furthermore a path on nodes
$w^1_3,\ov{w^1_3},w^2_3,\ov{w^2_3}, \dots, w^n_3,\ov{w^n_3}$ in this
order and add the edges $w^j_1w^j_2, w^j_2w^j_3, \ov{w^j_1}\ov{w^j_2},
\ov{w^j_2}\ov{w^j_3}$ for every $j=1,2,\dots, n$.  Let $s=z^1_1$ and
$t=z^n_2$.
 
\begin{figure}[!ht]
\begin{center}
\input{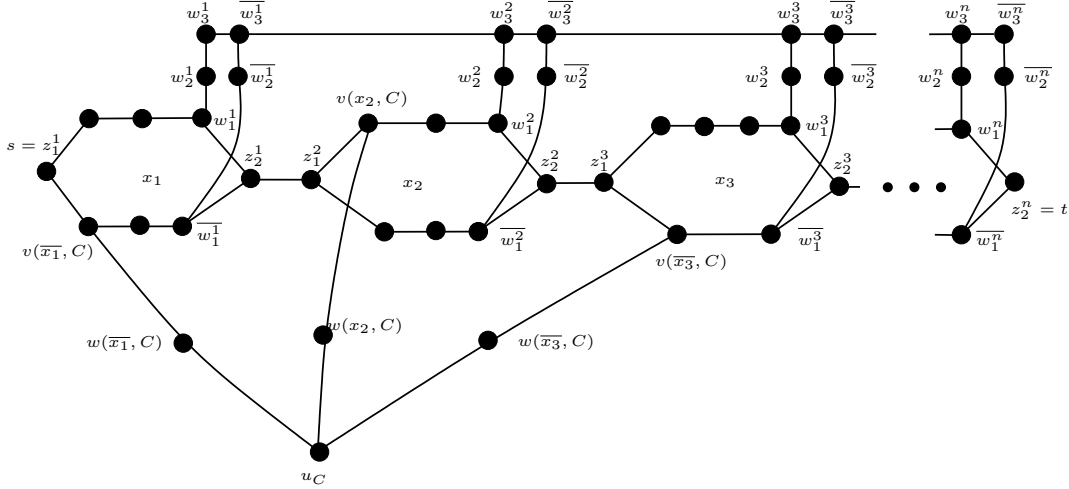}
\caption{Part of the construction of graph $G_\varphi$ for clause
  $C=(\ov{x_1}\vee {x_2}\vee \ov{x_3})$.}\label{fig:G_phi}
\end{center}
\end{figure}

The construction of the graph $G_\varphi$ is
finished.  An illustration can be found in Figure \ref{fig:G_phi}.  

Clearly, $G_\varphi$ is simple and has maximum degree three. 

If $\tau$ is a truth assignment to the variables $x_1,x_2,\dots,x_n$
then we define an \stp  $P_\tau$ as follows: for every
$j=1,2,\dots,n$, if $x_j$ is set to TRUE then let $P_\tau$ go through
the nodes $z^j_1, v(\ov{x_j},C_{b^j_{1}}),
v(\ov{x_j},$ $C_{b^j_{2}}),\dots ,$ $v(\ov{x_j},$ $C_{b^j_{l_j}}),$ $\ov{w^j_1},z^j_2$,
otherwise (i.e., if $x_j$ is set to FALSE) let $P_{\tau}$ go through $z^j_1,
v({x_j},$ $C_{a^j_{1}}), $ $v({x_j},$ $C_{a^j_{2}}),\dots
,$ $v({x_j},$ $C_{a^j_{k_j}}),{w^j_1}, z^j_2$.

We need one more concept. An \stp $P$ is called an
\emph{assignment-defining path} if 
 $v\in V(P),\ d_G(v)=2$ implies $v\in \{s,t\}$.
For such a path $P$ we define the truth assignment $\tau_P$ such that
$P_{\tau_P}=P$. 



\begin{cl}
There is an \stp  $P\subseteq E$ such that $(V,\; E\!-\!P)$ is
connected if and only if $\varphi\in 3SAT$. Consequently, 
Problem $\pst \d \spt$ is NP-complete.
\end{cl}

\begin{proof}
If $\tau$ is a truth assignment showing that $\varphi\in 3SAT$ then
$P_\tau$ is a path satisfying the requirements, as one can check. On
the other hand, if $P$ is an \stp such that $(V,\; E\!-\!P)$ is connected
then $P$ cannot go through nodes of degree 2, therefore $P$ is
assignment-defining,
and $\tau_P$ shows 
$\varphi\in 3SAT$.
\end{proof}

To show the NP-completeness of Problem
$\T \u \spt$, modify $G_{\varphi}$ the following way: subdivide the two
edges incident to $s$ with two new nodes $s'$ and $s''$ and connect
these two nodes with an edge. Repeat this with $t$: subdivide the two
edges incident to $t$ with two new nodes $t'$ and $t''$ and connect
$t'$ and $t''$. Let the graph obtained this way be $G=(V,E)$.
Clearly, $G$ is subcubic and simple. Note that the definition of an
assignment defining path and that of $P_\tau$ for a truth assignment
$\tau $ can be obviously modified for the graph $G$.

\begin{cl}
There exists a truth assignment $\tau$ such that every clause in $\varphi$
contains exactly one true literal if and only if there exists a set
$T\subseteq E$ such that $(V(T),\, T)$ is a tree and $(V,\; E\!-\!T)$ is a
spanning tree. Consequently, 
Problem $\T \u \spt$ is NP-complete.
\end{cl}
\begin{proof}
If $\tau$ is a truth assignment as above then one can see that
$T=P_\tau$ is an edge set satisfying the requirements.

On the other hand, assume that $T\subseteq E$ is such that $(V(T),\,
T)$ is a tree and $T^*=(V,\; E\!-\!T)$ is a spanning tree. Since $T^*$
cannot contain circuits, $T$ must contain at least one of the 3 edges
$ss',s's'',s''s$ (call it $e$), as well as at least one of the 3 edges
$tt',t't'',t''t$ (say $f$). Since $(V(T),\, T)$ is connected, $T$
contains a path $P\subseteq T$ connecting $e$ and $f$ (note that since
$(V, E\!-\!T)$ is connected, $|T\cap \{ss',s's'',s''s\}|=|T\cap
\{tt',t't'',t''t\}|=1$). 
Since $(V,\; E\!-\!P)$ is connected, $P$ cannot go through nodes of
degree 2 (except for the endnodes of $P$), and the edges $e$ and $f$
must be the last edges of $P$ (otherwise $P$ would disconnect $s$ or
$t$ from the rest). Thus, without loss of generality we can assume
that $P$ connects $s$ and $t$ (by locally changing $P$ at its ends),
and we get that $P$ is assignment defining. Observe that in fact $T$
must be equal to $P$, since $G$ is subcubic (therefore $T$ cannot
contain nodes of degree 3).  Consider the truth assignment $\tau_P$
associated to $P$, we claim that $\tau_P$ satisfies our requirements.
Clearly, if a clause $C$ of $\varphi$ does not contain a true literal
then $u_C$ is not reachable from $s$ in $G\!-\!T$, therefore every clause
of $\varphi$ contains at least one true literal. On the other hand
assume that a clause $C$ contains at least 2 true literals (say $x_j$
and $\ov{x_k}$ for some $j\ne k$), then one can see that there exists
a circuit in $G\!-\!T$ (because $v(x_j,C)$ is still reachable from
$v(\ov{x_k},C)$ in $G\!-\!T\!-\!u_C$ via the nodes $w_j^1, w_j^2,w_j^3$ and
$\ov{w_k^1}, \ov{w_k^2},\ov{w_k^3}$).
\end{proof}

Finally we prove the NP-completeness of Problem $\C \d \spt$.  For the
3CNF formula $\varphi$ with variables $x_1,x_2,\dots,x_n$ and clauses
$C_1,C_2,\dots,C_m$, let us associate the 3CNF formula $\varphi'$ with
the same variable set and clauses $(x_1 \vee x_1 \vee \ov{x_1}),\;
(x_2 \vee x_2 \vee \ov{x_2}), \dots,\; (x_n \vee x_n \vee \ov{x_n}),\; 
C_1,\; C_2,\dots, C_m$. Clearly, $\varphi$ is satisfiable \iff\ $\varphi'$
is satisfiable.
Construct the graph $G_{\varphi'}=(V,E)$ as
above (the construction is clear even if some clauses contain only 2
literals), and let $G=(V,E)$ be obtained from $G_{\varphi'}$ by adding
the edge $st$.

\begin{cl}
The formula $\varphi'$ is satisfiable if and only if there exists a
set $K\subseteq E$ such that $(V(K),\, K)$ is a circuit and $G\!-\!K=(V,\;
E\!-\!K)$ is connected. Consequently, Problem $\C \d \spt$ is NP-complete.
\end{cl}
\begin{proof}
First observe that if $\tau$ is a truth assignment satisfying
$\varphi'$ then $K=P_\tau\cup\{st\}$ is an edge set satisfying the
requirements.  On the other hand, if $K$ is an edge set satisfying the
requirements then $K$ cannot contain nodes of degree 2, since $G\!-\!K$ is
connected. We claim that $K$ can neither be a circuit associated to a
variable $x_i$, because in this case the node $u_C$ associated to
clause $C=(x_i \vee x_i \vee \ov{x_i})$ would not be reachable in
$G\!-\!K$ from $s$. Therefore $K$ consists of the edge $st$ and an
assignment defining path $P$. It is easy to check (analogously to the
previous arguments) that $\tau_P$ is a truth assignment satisfying
$\phi'$.
\end{proof}

\noindent As we have proved the NP-completeness of all three problems,
the theorem is proved.
\end{proof}


We note that the construction given in our original proof of the above
theorem (see \cite{quickpf})
was used  by Bang-Jensen and Yeo in \cite{bjyeo}.  They
settled an open problem raised by Thomass\'e in 2005.  They proved
that it is NP-complete to decide $\mathrm{SpA}\wedge \spt$ in
digraphs, where $\mathrm{SpA}$ denotes a spanning arborescence and
$\spt$ denotes a spanning tree in the underlying undirected graph.

We also point out that the planarity of the graphs in the above proofs
cannot be assumed. We do not know the status of any of the Problems
$\P_{st}\d \spt$, $\T\u \spt$, and $\T\u \T$ in planar graphs.  It was
shown in \cite{marcin} that Problem $\C \d \spt$ is polynomially
solvable in planar graphs.  We also mention that planar duality gives
that Problem $\C\d \spt$ in a planar graph is eqivalent to finding a
cut in a planar graph that contains no circuit: by the results of
\cite{marcin}, this problem is also polynomially solvable.  However
van den Heuvel \cite{heuvel} has shown that this problem is 
NP-complete for general (i.e., not necessarily planar) graphs.

We point out to an interesting connection towards the Graphic TSP
Problem. This problem can be formulated as follows. Given a connected
graph $G=(V,E)$, find a connected Eulerian subgraph of $2G$ spanning
$V$ with minimum number of edges (where $2G=(V,2E)$ is the graph
obtained from $G$ by doubling its edges). The connection is the
following. Assume that $F\subseteq 2E$ is a feasible solution to the
problem. A greedy way of improving $F$ would be to delete edges from it, while
maintaining the feasibility.  It is thus easy to observe that this
greedy improvement is possible if and only if the graph $(V,F)$
contains an edge-disjoint circuit and a spanning tree (which is
Problem $\C \d \spt$ in our notations). However, slightly modifying
the proof above it can be shown that Problem $\C \d \spt$ is also
NP-complete in Eulerian graphs (details can be found in
\cite{marcin}).



\begin{thm}\label{thm:cut+F}
Problem $\cut \u \F$ is NP-complete.
\end{thm}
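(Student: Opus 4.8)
The plan is to prove membership in NP (clear: given a candidate partition $E=D\cup F$ one checks in polynomial time that $F$ is acyclic and that $D$ is a cut, e.g.\ by contracting every component of $(V,E\setminus D)$ and testing bipartiteness of the resulting graph), and then to give a polynomial reduction from \textsc{Not-All-Equal 3SAT} (a standard NP-complete problem: decide whether a $3$-CNF formula has a truth assignment in which no clause gets all three literals equal; equivalently, $3$-uniform \textsc{Set Splitting}). The key starting observation is a reformulation: the edge set of a connected $G$ is $\cut\u\F$ if and only if $V$ admits a $2$-colouring, using both colours, whose monochromatic edges form a forest. Indeed, if $E=\delta(V')\cup F$ with $F$ a forest, colouring $V$ by membership in $V'$ makes $\delta(V')$ exactly the bichromatic edges, so the monochromatic ones lie in $F$; conversely such a colouring gives the cut (bichromatic edges, nonempty since both colours occur and $G$ is connected) and the forest (monochromatic edges) directly.

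For the reduction I would take a formula $\varphi$ with variables $x_1,\dots,x_n$ and clauses $C_1,\dots,C_m$ and build $G_\varphi$ from three gadgets. For each variable $x_i$: two vertices $T_i,F_i$ joined by a pair of parallel edges — since a forest cannot contain two parallel edges (a circuit of length $2$), this forces $T_i$ and $F_i$ to get opposite colours, and the colour of $T_i$ will encode the value of $x_i$. For each clause $C_j=(\ell_{j,1}\vee\ell_{j,2}\vee\ell_{j,3})$: three \emph{occurrence vertices} $a_{j,1},a_{j,2},a_{j,3}$ carrying a triangle — a monochromatic triangle is a circuit, so in a valid colouring these three vertices are not all the same colour. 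Finally, for each occurrence $\ell_{j,k}$: a helper vertex $h_{j,k}$ joined by parallel edges to $a_{j,k}$ and by parallel edges to $T_i$ (if $\ell_{j,k}=x_i$) or to $F_i$ (if $\ell_{j,k}=\ov{x_i}$), which forces $a_{j,k}$ to get the same colour as $T_i$, resp.\ $F_i$; thus the colour of $a_{j,k}$ records the truth value of the literal $\ell_{j,k}$. (If $G_\varphi$ happens to be disconnected — this can only be caused by the variable--clause incidence structure — I would join the components by a tree of paths, or simply argue componentwise; this does not affect anything.)

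Correctness is then routine in both directions. From a not-all-equal satisfying assignment $\tau$ I colour $T_i$ by $\tau(x_i)$, $F_i$ oppositely, each $a_{j,k}$ by the value of $\ell_{j,k}$, and each $h_{j,k}$ oppositely: every parallel-edge gadget becomes bichromatic and contributes nothing to the monochromatic subgraph, while each clause triangle, being $2$--$1$ split, contributes exactly one monochromatic edge between two occurrence vertices of that clause; since distinct clauses use disjoint occurrence vertices, these edges form a matching, hence a forest, so the colouring is valid and $E(G_\varphi)$ is $\cut\u\F$. Conversely a valid colouring forces $T_i\neq F_i$ and forces $a_{j,k}$ to agree with the appropriate one of $T_i,F_i$; setting $\tau(x_i):=$ colour of $T_i$, the colour of $a_{j,k}$ equals the value of $\ell_{j,k}$, and since the clause triangle is not monochromatic, $C_j$ is not-all-equal satisfied. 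So $\varphi$ is not-all-equal satisfiable iff $E(G_\varphi)$ is $\cut\u\F$.

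The step I expect to be the real obstacle is producing a \emph{simple} input graph, as we claim for all our reductions: one wants to replace the parallel-edge ``force equal''/``force different'' gadgets by simple subgraphs with the same forcing behaviour, and this is delicate precisely because the cut side of the partition is so permissive — a simple ``force-different-colours'' gadget is not immediate, and any simple ``force-equal'' gadget necessarily leaves a monochromatic path between its two terminals, whose interaction with the clause triangles must then be controlled so that no spurious monochromatic circuit appears. I would therefore expect the bulk of the technical work to be in this simple-graph bookkeeping rather than in the high-level reduction above.
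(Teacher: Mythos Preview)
Your reformulation (two-colour the vertices so that the monochromatic edges form a forest) and your choice of source problem are exactly those of the paper, and your multigraph reduction is correct as written. The one point where you diverge is precisely the one you flag: you rely on parallel-edge ``force different'' gadgets and defer the simple-graph version, whereas the paper goes for a simple graph directly --- and does so by a different, cleaner gadget that avoids your equality/inequality machinery entirely.

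Concretely, the paper reduces from $2$-colourability of a $3$-uniform hypergraph $H=(V,\cE)$ (i.e.\ monotone NAE-3SAT / Set Splitting, the same problem you use). For each hyperedge $e=\{v_1,v_2,v_3\}$ it introduces six new vertices $x_{v_i,e},y_{v_i,e}$ ($i=1,2,3$), puts the octahedron $K_{2,2,2}$ on them (all pairs adjacent except $x_{v_i,e}y_{v_i,e}$), and attaches each $v_i$ to both $x_{v_i,e}$ and $y_{v_i,e}$. That is the whole graph: simple, $18|\cE|$ edges. The point is that a $2$-colouring of the octahedron avoids a monochromatic cycle iff, up to permutation, one pair $\{x_{v_i,e},y_{v_i,e}\}$ is entirely blue, another entirely red, and the third is split; and the two pendant edges at each $v_i$ then tie the colour of $v_i$ to the colour pattern of its pair without creating parallel edges. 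This single gadget simultaneously plays the role of your clause triangle \emph{and} your equality forcings, so no separate ``force equal/different'' device is needed.

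So: your argument is sound but stops short of the simple-graph claim; the paper's contribution over your sketch is exactly the octahedron-plus-pendants gadget that delivers simplicity for free.
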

\begin{proof}
The problem is clearly in NP.  In order to show its completeness let
us first rephrase the problem. Given a graph, Problem $\cut \u \F$ asks
whether we can colour the nodes of this graph with two colours such
that no monochromatic circuit exists.

Consider the  NP-complete Problem 
2-COLOURABILITY OF A 3-UNIFORM HYPERGRAPH 
This problem is the following: given a 3-uniform hypergraph $H=(V,
\cE)$, can we colour the set $V$ with two colours (say red and blue)
such that there is no monochromatic hyperedge in $\cE$ (the problem is
indeed NP-complete, since Problem GT6 in \cite{gj} is a special case
of this problem).  Given the instance $H=(V,\cE)$ of this problem,
construct the following graph $G$. The node set
of $G$ is $V\cup V_{\cE}$, where $V_{\cE}$ is disjoint from $V$ and it
contains 6 nodes for every hyperedge in $\cE$: for an arbitrary
hyperedge $e=\{v_1,v_2,v_3\}\in \cE$, the 6 new nodes associated to it
are $x_{v_1,e}, y_{v_1,e}, x_{v_2,e}, y_{v_2,e}, x_{v_3,e}, y_{v_3,e}$. 
The edge set of $G$ contains the following edges: for the hyperedge
$e=\{v_1,v_2,v_3\}\in \cE$, $v_i$ is connected with  $x_{v_i,e}$ and
$y_{v_i,e}$ for every $i=1,2,3$, and among the 6 nodes associated to $e$
every two is connected with an edge
except for the 3 pairs of form $x_{v_i,e},y_{v_i,e}$ for $i=1,2,3$
(i.e., $|E(G)|=18|\cE|$).  
The construction of $G$ is finished. An illustration can be found in
Figure \ref{fig:cutplF}. Note that in any two-colouring of $V\cup
V_{\cE}$ the 6 nodes associated to the hyperedge $e=\{v_1,v_2,v_3\}\in
\cE$ do not induce a monochromatic circuit if and only if there exists
a permutation $i,j,k$ of $1,2,3$ so that they are coloured the
following way: $x_{v_i,e},y_{v_i,e}$ is blue, $x_{v_j,e},y_{v_j,e}$ is
red and $x_{v_k,e},y_{v_k,e}$ are of different colour.

\begin{figure}[!ht]
\begin{center}
\input{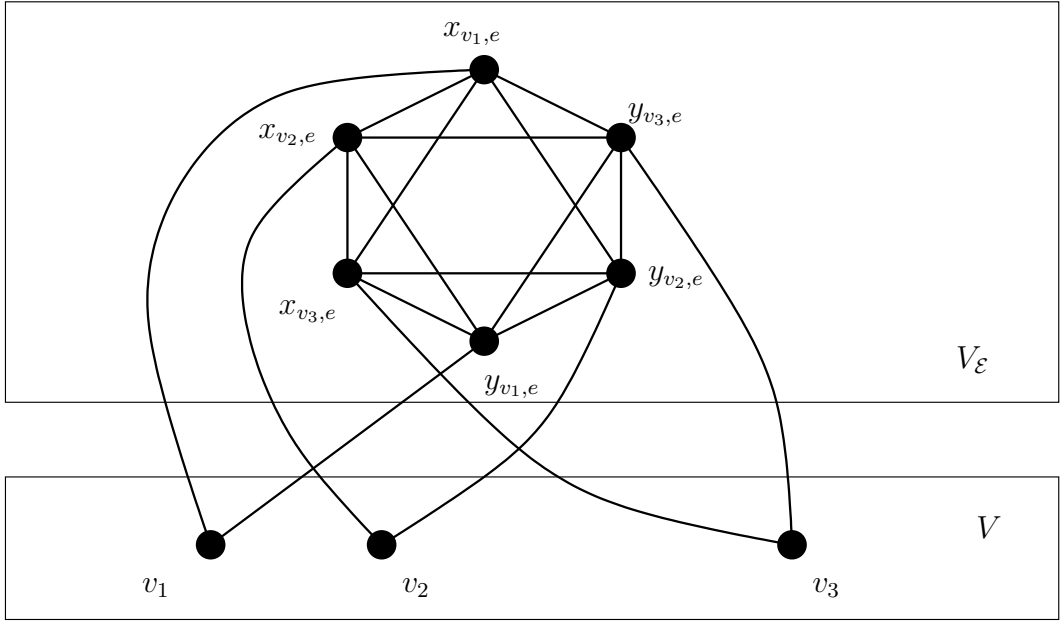}
\caption{Part of the construction of the graph $G$ in the proof of
  Theorem \ref{thm:cut+F}.}\label{fig:cutplF}
\end{center}
\end{figure}

One can check that $V$ can be coloured with 2 colours such that there
is no monochromatic hyperedge in $\cE$ \iff\ $V\cup V_{\cE}$ can be
coloured with 2 colours such that there is no monochromatic circuit in
$G$. 
\end{proof}

We note  that we do not know the status of Problem $\cut \u
\F$ in planar graphs.

\section{Algorithms} \label{sec:alg1}\label{sec:alg2}


\paragraph{Algorithm for $\pst \d \C$.}
  Assume we are given a connected multigraph $G=(V,E)$ and two nodes
  $s,t\in V$, and we want to decide whether an $s$-$t$-path
  $P\subseteq E$ and a circuit $C\subseteq E$ exists with $P\cap
  C=\emptyset$. We may even assume that both $s$ and $t$ have degree
  at least two.  If $v\in V\!-\!\{s,t\}$ has degree at most two then we
  can eliminate it.  If there is a cut-vertex $v\in V$ then we can
  decompose the problem into smaller subproblems by checking whether
  $s$ and $t$ fall in the same component of $G\!-\!v$, or not. If they do
  then $P$ should lie in that component, otherwise $P$ has to go
  through $v$.

If there is a non-trivial two-edge $s$-$t$-cut (i.e., a set $X$ with
$\{s\}\subsetneq X\subsetneq V\!-\!t$, and $d_G(X)=2$), then we can again
reduce the problem in a similar way:
the circuit to be found cannot use both edges entering $X$ and we have
to solve two smaller problems obtained by contracting $X$ for the
first one, and contracting $V\!-\!X$ for the second one. 

%
So we can assume that $|E|\ge n+\lceil n/2
\rceil-1$, and that $G$ is $2$-connected and $G$
has no non-trivial two-edge $s$-$t$-cuts.  
Run a BFS
from $s$ and associate levels to vertices ($s$ gets $0$).  If $t$ has
level at most $\lceil n/2 \rceil -1$ then we have a path of length at
most $\lceil n/2 \rceil -1$ from $s$ to $t$, after deleting its edges,
at least $n$ edges remain, so we are left with a circuit.

So we may assume that the level of $t$ is at least $\lceil n/2
\rceil$.  As $G$ is $2$-connected, we must have at least two vertices
on each intermediate level. Consequently $n$ is even, $t$ is on level
$n/2$, and we have exactly two vertices on each intermediate level,
and each vertex $v\in V\!-\!\{s,t\}$ has degree $3$, or, otherwise for a
minimum $s$-$t$ path $P$ we have that $G\!-\!P$ has at least $n$ edges,
i.e., it contains a circuit.  We have no non-trivial two-edge $s$-$t$-cuts,
consequently there can only be two cases: either $G$ equals to $K_4$ with
edge $st$ deleted, or $G$ arises from a $K_4$ such that two opposite
edges are subdivided (and these subdivision nodes are $s$ and $t$). In
either cases we have no solution.

\medskip

\paragraph{Algorithm for $\C \d \C$.}
We give a simple polynomial time algorithm for deciding whether two
edge-disjoint circuits can be found in a given connected multigraph
$G=(V,E)$. We note that a polynomial (but less elegant) algorithm for
this problem was also given in \cite{bodl}.

If any vertex has degree at most two, we can eliminate it, so we
may assume that the minimum degree is at least $3$.  If $G$ has at
least $16$ vertices, then it has a circuit of length at most $n/2$
(simply run a BFS from any node and observe that there must be a
non-tree edge between some nodes of depth at most $\log(n)$, giving us
a circuit of length at most $2\log(n)\le n/2$), and after deleting the
edges of this circuit, at least $n$ edges remain, so we are left with
another circuit. For smaller graphs we can check the problem in constant
time.



\section{Matroidal generalizations}\label{sec:matroid}

In this section we will consider the matroidal generalizations for the
problems that were shown to be polynomially solvable in the graphic
matroid. In fact we will only need linear matroids, since it turns out
that the problems we consider are already NP-complete in them.  We
will use the following result of Khachyan.

\begin{thm}[Khachyan \cite{khachyan}]\label{thm:khac}
Given a $D\times N$ matrix over the rationals, it is NP-complete to
decide whether there exist $D$ linearly dependent columns.
\end{thm}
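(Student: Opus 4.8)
A certificate is a set $S$ of $D$ column indices; one checks in polynomial time that the corresponding $D\times D$ submatrix $M_S$ is singular, i.e. $\det M_S=0$ (equivalently $\mathrm{rank}(M_S)<D$). Over $\mathbb{Q}$ this is done by fraction-free (Bareiss) Gaussian elimination: every intermediate quantity is a subdeterminant of $M_S$, hence bounded by Hadamard's inequality, so its bit-length stays polynomial in the input size. Thus the problem lies in NP.

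\textbf{Hardness.} The plan is to reduce from SUBSET SUM: given positive integers $a_1,\dots,a_n$ and a target $b$, decide whether some $x\in\{0,1\}^n$ satisfies $\sum_i a_ix_i=b$. The algebraic heart is a determinant that is \emph{linear} in the chosen subset. Working in $\mathbb{Q}^{n+1}$ with coordinates $1,\dots,n,n+1$, I would associate to variable $i$ an ``in'' column $p_i=e_i+a_ie_{n+1}$ and an ``out'' column $q_i=e_i$, and add one target column $v^{*}=\sum_{i=1}^n e_i+b\,e_{n+1}$, so that $D=n+1$ and $N=2n+1$. If one selects one column of each pair together with $v^{*}$, the resulting matrix has the block form $\left(\begin{smallmatrix} I_n & \mathbf 1\\ c^\top & b\end{smallmatrix}\right)$, where $c_i=a_i$ when $p_i$ was chosen and $c_i=0$ otherwise; its Schur complement gives $\det = b-\sum_i c_i = b-\sum_{i\in S}a_i$. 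Hence a YES-instance of SUBSET SUM yields $D$ linearly dependent columns, which settles the easy direction.

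\textbf{The main obstacle (soundness).} The hard part is to guarantee that \emph{every} dependent $D$-subset corresponds to a genuine $0/1$ selection meeting the sum condition, i.e. that the construction plants \emph{no spurious} singular minor. The naive gadget above fails precisely here: the difference vectors $p_i-q_i=a_ie_{n+1}$ are all mutually parallel, so already $a_j(p_i-q_i)-a_i(p_j-q_j)=0$ makes four columns dependent, and padding with ``out'' columns produces dependent $D$-sets unrelated to $b$. Removing these requires breaking the parallelism: I would replace the single summation coordinate by a generic \emph{frame} — for instance placing the choice columns so that the pair columns realize a uniform matroid $U_{D,2n}$ (every $D-1$ of them independent) — so that the only way to reach rank $D-1$ among selected columns is to take exactly one column per variable, after which $v^{*}$ closes a circuit exactly when $\sum_{i\in S}a_i=b$. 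The delicate, genuinely technical point is to choose this frame with \emph{rational entries of polynomially bounded bit-length} while \emph{certifying} that no unintended $D$-subset is singular; this is a Schur/Vandermonde genericity computation that must be carried out with explicit numbers (not merely ``generic'' ones) to keep the reduction polynomial.

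\textbf{A clean restatement.} Finally I would observe why the hardness is natural: since the columns lie in $\mathbb{Q}^D$, a $D$-subset is dependent exactly when it fails to be a basis, so the question asks whether the linear matroid represented by the matrix is \emph{non-uniform} (has a circuit of size at most its rank). The reduction above shows that detecting non-uniformity is NP-complete — which is precisely the form in which the theorem is invoked in Section~\ref{sec:matroid}.
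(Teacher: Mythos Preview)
Your NP membership argument is fine. The hardness argument, however, has a genuine gap that you yourself flag: you never actually construct the matrix. You correctly observe that the naive pair-gadget $p_i=e_i+a_ie_{n+1}$, $q_i=e_i$ creates spurious dependent $D$-sets (the vectors $p_i-q_i$ are all parallel), and you correctly sense that a Vandermonde-type genericity is what is needed, but the sentence ``this is a Schur/Vandermonde genericity computation that must be carried out with explicit numbers'' is precisely the whole proof --- and you do not carry it out. As it stands this is a plan, not a reduction.

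The paper itself cites the theorem to Khachyan without a standalone proof, but inside the proof of Theorem~\ref{thm:CdC} it gives (following Vardy) exactly the construction you are missing. Two differences from your sketch are decisive:

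\begin{enumerate}
\item The reduction is not from general \textsc{Subset-Sum} but from its \emph{fixed-cardinality} variant (Problem~\ref{prob:sum}: do some $d$ of the \emph{distinct} integers $a_1,\dots,a_n$ sum to $b$?). This eliminates the need for your in/out pair gadget altogether --- one simply asks which $d$ of the $n$ columns participate.
\item The explicit ``frame'' is the Vandermonde moment curve: column $i$ is $(1,a_i,a_i^2,\dots,a_i^d)^T$, and one extra column $(0,\dots,0,1,b)^T$ is appended, giving a $(d{+}1)\times(n{+}1)$ matrix. The classical Vandermonde determinant guarantees that any $d{+}1$ of the first $n$ columns are independent (the $a_i$ are distinct), which kills \emph{all} spurious singular minors in one stroke. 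The alternant identity
\[
\det\begin{pmatrix} 1&\cdots&1\\ x_1&\cdots&x_k\\ \vdots&&\vdots\\ x_1^{k-2}&\cdots&x_k^{k-2}\\ x_1^{k}&\cdots&x_k^{k}\end{pmatrix}=(x_1+\cdots+x_k)\prod_{i<j}(x_j-x_i)
\]
then yields, by expanding along the last column, that $d$ Vandermonde columns together with the extra column are dependent iff $b=a_{i_1}+\cdots+a_{i_d}$. Your bit-length worry evaporates: the entries are $a_i^j$ with $j\le d$, polynomial in the input.
\end{enumerate}

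So your instinct (``something Vandermonde'') was right, but the switch to a fixed-cardinality source problem with distinct summands is what makes the construction go through, and that is what your attempt lacks.
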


First we consider the matroidal generalization of Problem $\cut \d \C$.

\begin{thm}\label{thm:CutdC}
It is NP-complete to decide whether an (explicitly given) linear matroid
contains a  cut and a circuit that are disjoint.
\end{thm}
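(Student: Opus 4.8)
The plan is to reduce from Khachyan's theorem (Theorem \ref{thm:khac}): given a $D \times N$ rational matrix $M$, deciding whether some $D$ columns are linearly dependent is NP-complete. I want to build a linear matroid in which a disjoint cut-and-circuit corresponds exactly to a witness of such a dependency. Recall that in a matroid a cut (cocircuit/bond) is an inclusionwise minimal set meeting every base, equivalently the complement of a hyperplane; disjointness of a circuit $C$ and a cocircuit $D^*$ is the same as saying there is a hyperplane containing $C$ and a circuit avoiding that hyperplane — or, dually, $C$ lies in a flat of corank $1$. The cleanest route is via duality: a matroid $N$ has disjoint circuit and cocircuit if and only if the dual $N^*$ has disjoint cocircuit and circuit, so it suffices to construct one linear matroid whose circuits and cocircuits encode the matrix-dependency question; since the dual of a linear matroid is linear and computable, either phrasing works.

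Concretely, first I would take the matrix $M$ and form the matroid $N$ represented by the $D \times N$ matrix $[\,I_D \mid M\,]$ over $\mathbb{Q}$, where $I_D$ is the $D \times D$ identity. This matroid has rank $D$ on $N + D$ elements. The $D$ "identity" elements $e_1, \dots, e_D$ form a base, and crucially $\{e_1, \dots, e_D\}$ is the complement of... — here I need to be careful to arrange that a single circuit living entirely among the $M$-columns together with a cocircuit sitting among the $e_i$'s does the job. The key observation I want to exploit: a set of $k \le D$ columns of $M$ is dependent in $N$ iff it contains a circuit of $N$; and I want to force that circuit to be disjoint from some cocircuit of a prescribed small size. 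A cocircuit of $N$ is the complement of a hyperplane (a flat of rank $D-1$). So I should design the representation so that there is a hyperplane $H$ containing all $N$ columns of $M$ but only $D-1$ of the $e_i$'s — then the complementary cocircuit is a subset of $\{e_1,\dots,e_D\}$ (in fact a single element, if exactly one $e_i$ lies outside $H$), and it is automatically disjoint from any circuit contained in the $M$-columns. Thus "$N$ has a circuit disjoint from a cocircuit" would follow from "some subset of the $M$-columns is dependent", which is just "the rank of $M$ is less than $N$" — too weak; this does not capture the "exactly $D$ columns" constraint of Khachyan's problem.

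To capture the cardinality constraint I would instead make the construction sensitive to the size of the circuit. The honest approach: take $N$ to be represented by $[\,I_D \mid M\,]$, and observe that $D$ columns of $M$ are dependent iff $N$ restricted to the $M$-columns has a circuit of size \emph{at most} $D$ — but combined with the fact that any $D-1$ columns of a generic-enough $M$ are independent we cannot assume. So the real work is to encode "circuit of size $\le D$" as "disjoint from a specific large cocircuit". I would use the dual: in $N^* = [\,-M^\top \mid I_N\,]$ (rank $N$ on $N+D$ elements), a cocircuit of $N^*$ is a circuit of $N$, i.e. a dependent set of columns of $[I_D\mid M]$, and a circuit of $N^*$ is a cocircuit of $N$. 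Then "disjoint circuit and cocircuit in $N$" $=$ "disjoint cocircuit and circuit in $N^*$". The main obstacle — and the step I expect to spend the most care on — is rigging the incidence structure (perhaps by padding $M$ with a controlled block of extra rows/columns, e.g. adjoining a column $w$ and rows so that every small dependency is forced to use exactly $D$ of the original columns and to avoid a designated cocircuit) so that disjointness holds \emph{iff} the dependency has size exactly $D$, not merely $\le D$. Once that padding is set up correctly, verifying that the reduction is polynomial, that $N$ (and $N^*$) are genuinely linear with an explicitly computable representation over $\mathbb{Q}$, and that membership in NP is immediate (a witness is the circuit, the cocircuit, and a certifying submatrix) are all routine, and the theorem follows.
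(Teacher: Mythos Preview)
Your proposal has a genuine gap: you never actually complete the reduction. You correctly diagnose that the naive construction $[I_D\mid M]$ fails to encode the cardinality constraint ``exactly $D$ columns'', and you then promise a ``padding'' gadget to fix this, but you do not specify it. Without that gadget there is no proof.

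More importantly, you are missing the one-line structural fact that makes all of the gadgetry unnecessary: a matroid has a circuit and a cocircuit that are disjoint \emph{if and only if} the matroid is not uniform. One direction: in $U_{r,n}$ every circuit has size $r+1$ and every cocircuit has size $n-r+1$, so two disjoint ones would need $n+2$ elements. The other direction: if the matroid is not uniform there is a circuit $C$ of size at most $r$, hence $\mathrm{rk}(C)\le r-1$, so $C$ lies in a hyperplane and the complementary cocircuit avoids it. Equivalently, ``no disjoint cut and circuit'' is the same as ``every circuit contains a base'', which is the same as uniformity.

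With this in hand the reduction is immediate: feed Khachyan's $D\times N$ matrix straight in as the representation. Its column matroid (after a polynomial-time check that the rank is $D$, the only interesting case) is uniform precisely when no $D$ columns are dependent, so it has a disjoint circuit and cut precisely when some $D$ columns \emph{are} dependent. No $[I_D\mid M]$, no duality, no padding. This is exactly how the paper proceeds.
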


\begin{proof}
Observe that there is no disjoint cut and circuit in a matroid if and
only if every circuit contains a base, that is equivalent with the
matroid being uniform.  
Khachyan's Theorem \ref{thm:khac} is equivalent with the uniformness
of the linear matroid determined by the coloumns of the matrix in
question, proving our theorem.
\end{proof}


Finally we consider the matroidal generalization of Problem $\C \d \C$
and $\cut \d \cut$.

\begin{thm}\label{thm:CdC}
The problem of deciding whether an (explicitly given) linear matroid
contains two disjoint circuits is NP-complete.
\end{thm}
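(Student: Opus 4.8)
The plan is to reduce from Khachyan's problem (Theorem~\ref{thm:khac}); the guiding observation is that, after a trivial normalization, ``$A$ has $D$ linearly dependent columns'' means exactly ``the linear matroid $M_A$ of the columns of $A$ is not uniform''. Membership in NP is clear: a certificate is a pair of disjoint column sets $C_1,C_2$, and one checks in polynomial time by Gaussian elimination that each $C_i$ is a circuit (that is, $\operatorname{rank}(C_i)=|C_i|-1$ while $\operatorname{rank}(C_i\setminus\{e\})=|C_i|-1$ for every $e\in C_i$). For the hardness, given a $D\times N$ rational matrix $A$ I would first normalize: if $N<D$, output a fixed no-instance (a single coloop, which has no circuit); if $\operatorname{rank}(A)<D$, output a fixed yes-instance (the rank-$2$ linear matroid whose four columns form two parallel pairs, giving two disjoint $2$-element circuits); otherwise $\operatorname{rank}(A)=D$, $N\ge D$, and we may also assume $D\ge1$. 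In this remaining case a set of $D$ columns is linearly dependent iff it is not a basis of $M_A$, so $A$ has $D$ linearly dependent columns iff $M_A\neq U_{D,N}$.

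The heart of the argument is the elementary fact that a matroid of rank $r$ on a ground set of \emph{exactly} $2r+1$ elements has two disjoint circuits if and only if it is not uniform: if it equals $U_{r,2r+1}$ its circuits are the $(r{+}1)$-subsets and no two of those are disjoint; conversely, if it is not uniform it contains a circuit $C$ with $|C|\le r$, and then the remaining $2r{+}1-|C|\ge r{+}1$ elements span a matroid of rank $\le r$, hence contain a second circuit disjoint from $C$. So it suffices to construct, in polynomial time, a \emph{linear} matroid $M'$ on $2r+1$ elements of rank $r$ (with $r$ depending on $A$) that is uniform if and only if $M_A$ is.

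I would split into two cases that together cover every value of $N$. If $N\le 2D+1$: append to $A$ some $s:=2D+1-N\ge0$ further columns of $\mathbb{Q}^{D}$ in general position with respect to the whole configuration, and let $M'$ be the matroid of the resulting $D\times(2D+1)$ matrix; then $r=D$, and $M'$ is $U_{D,2D+1}$ iff the padded configuration is in general position iff $M_A=U_{D,N}$ (while a dependent $D$-subset of $M_A$ stays dependent, so non-uniformity is also preserved). If $N>2D+1$ (hence $N\ge 2D+2$ and $N\le 2(N-D)+1$): compute, by Gaussian elimination, a rational representation of the dual matroid $M_A^{*}$, which has rank $N-D$ on $N$ elements and is uniform iff $M_A$ is; pad it with $2(N-D)+1-N\ge0$ columns of $\mathbb{Q}^{N-D}$ in general position, obtaining a linear $M'$ on $2(N-D)+1$ elements of rank $r=N-D$ that is uniform iff $M_A^{*}$ — equivalently $M_A$ — is. In both cases $M'$ has two disjoint circuits iff $M_A$ is not uniform iff $A$ has $D$ linearly dependent columns, and $M'$ is produced in polynomial time, establishing NP-completeness.

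The step I expect to be the real obstacle is the regime $N>2D+1$: one cannot just pad $M_A$ up with generic columns, because once the ground set exceeds $2D+1$ even $U_{D,N}$ acquires two disjoint circuits, killing the equivalence; passing to the dual is precisely what trades the large parameter $N-D$ for the rank and rescues the construction. The other point to handle carefully is that the general-position padding is genuinely polynomial time and provably preserves uniformity; this can be arranged by taking the new columns of Vandermonde type with a common base chosen exponentially large (so polynomially many bits), since the finitely many determinants that must not vanish are nonzero polynomials of bounded degree in that base, hence avoid their roots for a large enough base. Finally, since the dual of a linear matroid is linear and the cocircuits of $M$ are exactly the circuits of $M^{*}$, applying the theorem to $M^{*}$ shows that it is also NP-complete to decide whether an explicitly given linear matroid has two disjoint cocircuits, i.e.\ two disjoint cuts, which yields Corollary~\ref{cor:CutdCut}.
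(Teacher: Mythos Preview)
Your argument is correct and takes a genuinely different route from the paper. Both proofs rest on the same elementary fact --- a rank-$r$ matroid on exactly $2r+1$ elements has two disjoint circuits if and only if it is not uniform --- and both reduce from Khachyan's problem, but they reach the regime $N=2D+1$ in opposite ways. You treat Theorem~\ref{thm:khac} as a black box and normalise an arbitrary $D\times N$ instance to the right shape: if $N\le 2D+1$ you pad with generic (Vandermonde-type) columns, and if $N>2D+1$ you first pass to the dual $M_A^*$ (which has rank $N-D\ge D+1$ on $N\le 2(N-D)+1$ elements) and then pad. The paper instead re-proves Khachyan's theorem directly in the special case $N=2D+1$: it goes back through Vardy's $(d+1)\times(n+1)$ Vandermonde-type matrix for a cardinality-constrained variant of \textsc{Subset-Sum}, and then reduces from \textsc{Exact-Cover-by-3-Sets} with a counting gadget to force $n=2d+2$, so that the resulting matrix has exactly $D\times(2D+1)$ shape. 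Your approach is shorter and more modular once Theorem~\ref{thm:khac} is granted, and the dual trick for $N>2D+1$ is a nice observation the paper does not use; the paper's approach is more self-contained (it never invokes Khachyan's original reduction as a black box) and sidesteps the slightly delicate verification that the generic padding can be carried out in polynomial time over~$\mathbb{Q}$.
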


\begin{proof}
We will prove here that Khachyan's Theorem \ref{thm:khac} is
true even if $N=2D+1$, which implies our theorem, since there are two
disjoint circuits in the linear matroid represented by this $D\times
(2D+1)$ matrix if and only if there are $D$ linearly dependent columns
in it.

Khachyan's proof of Theorem \ref{thm:khac} was simplified by Vardy
\cite{vardy}, we will follow his line of proof. Consider the following
problem.
\begin{prob}\label{prob:sum}
Given different positive integers $a_1,a_2,\dots,a_n,b$ and a positive
integer $d$, decide whether there exist $d$ indices $1\le
i_1<i_2<\dots<i_d\le n$ such that $b=a_{i_1}+a_{i_2}+\dots+a_{i_d}$.
\end{prob}

\newcommand{\subsum}{{\sc Subset-Sum}}

Note that Problem \ref{prob:sum} is very similar to the
\subsum\ Problem (Problem SP13 in \cite{gj}), the only difference
being that in the \subsum\ problem we do not specify $d$, and the
numbers $a_1,a_2,\dots,a_n$ need not be different. On the other hand,
here we will strongly need that the numbers $a_1,a_2,\dots,a_n$ are
all different.  Vardy has shown the following claim (we include a
proof for sake of completeness).

\begin{cl}\label{cl:vardy}
There is solution to Problem \ref{prob:sum} if and only if there are
$d+1$ linearly dependent columns (above the rationals) in the
$(d+1)\times (n+1)$ matrix
\[
\begin{pmatrix}
  1    & 1    & \cdots & 1 & 0 \\
  a_{1} & a_{2} & \cdots & a_{n}& 0 \\
  \vdots  & \vdots  & \ddots & \vdots  \\
  a_{1}^{d-2} & a_{2}^{d-2} & \cdots & a_{n}^{d-2} & 0 \\
  a_{1}^{d-1} & a_{2}^{d-1} & \cdots & a_{n} ^{d-1} & 1 \\
  a_{1}^{d} & a_{2}^{d} & \cdots & a_{n} ^{d} & b 
\end{pmatrix}.\] 
\end{cl}

\begin{proof}
We use the following facts about determinants. Given real numbers
$x_1,x_2,\dots,x_k$, 
we have the following well-known relation for the Vandermonde
determinant:
\[
\det\begin{pmatrix}
  1    & 1    & \cdots & 1  \\
  x_{1} & x_{2} & \cdots & x_{k}\\
  \vdots  & \vdots  & \ddots & \vdots  \\
  x_{1}^{k-1} & x_{2}^{k-1} & \cdots & x_{k} ^{k-1} 
\end{pmatrix}=\prod_{i<j}(x_j-x_i).
\]
Therefore the Vandermonde determinant is not zero, if the numbers
$x_1,x_2,\dots,x_k$ are different. Furthermore, we have the following
relation for an alternant of the Vandermonde determinant (see
Chapter V in \cite{muir}, for example):
\[
\det\begin{pmatrix}
  1    & 1    & \cdots & 1  \\
  x_{1} & x_{2} & \cdots & x_{k}\\
  \vdots  & \vdots  & \ddots & \vdots  \\
  x_{1}^{k-2} & x_{2}^{k-2} & \cdots & x_{k} ^{k-2} \\
  x_{1}^{k} & x_{2}^{k} & \cdots & x_{k} ^{k} 
\end{pmatrix}=(x_1+x_2+\dots+x_k)\prod_{i<j}(x_j-x_i).
\]
W include a proof of this last fact: given an arbitrary $k\times k$
matrix $X=((x_{ij}))$ and numbers $u_1,\dots,u_k$, observe (by checking the coefficients of the $u_i$s on each side) that 
\begin{eqnarray*}
\det\begin{pmatrix}
  u_1 x_{11} & u_2x_{12} & \cdots & u_kx_{1k}\\
  x_{21} & x_{22} & \cdots & x_{2k}\\
  \vdots  & \vdots  & \ddots & \vdots  \\
  x_{k1} & x_{k2} & \cdots & x_{kk} \\
\end{pmatrix}+
\det\begin{pmatrix}
  x_{11} & x_{12} & \cdots & x_{1k}\\
  u_1 x_{21} & u_2 x_{22} & \cdots & u_k x_{2k}\\
  \vdots  & \vdots  & \ddots & \vdots  \\
  x_{k1} & x_{k2} & \cdots & x_{kk} \\
\end{pmatrix}
+\dots +\\
\det\begin{pmatrix}
  x_{11} & x_{12} & \cdots & x_{1k}\\
  x_{21} & x_{22} & \cdots & x_{2k}\\
  \vdots  & \vdots  & \ddots & \vdots  \\
  u_1 x_{k1} & u_2x_{k2} & \cdots & u_k x_{kk} \\
\end{pmatrix}
=
(u_1+u_2+\dots u_k)\det\begin{pmatrix}
  x_{11} & x_{12} & \cdots & x_{1k}\\
  x_{21} & x_{22} & \cdots & x_{2k}\\
  \vdots  & \vdots  & \ddots & \vdots  \\
  x_{k1} & x_{k2} & \cdots & x_{kk} \\
\end{pmatrix}
\end{eqnarray*}
Now apply this to the Vandermonde matrix  $X=\begin{pmatrix}
  1    & 1    & \cdots & 1  \\
  x_{1} & x_{2} & \cdots & x_{k}\\
  \vdots  & \vdots  & \ddots & \vdots  \\
  x_{1}^{k-1} & x_{2}^{k-1} & \cdots & x_{k} ^{k-1} 
\end{pmatrix}$ and numbers $u_i=x_i$  for every $i=1,2,\dots,k$.
 
We will use these two facts. The first one implies that if $d+1$
columns of our matrix are dependent then they have to include the last
column. By the second fact, if $1\le
i_1<i_2<\dots<i_d\le n$ are arbitrary indices then 
\[
\det\begin{pmatrix}
  1    & 1    & \cdots & 1 & 0 \\
  a_{i_1} & a_{i_2} & \cdots & a_{i_d}& 0 \\
  \vdots  & \vdots  & \ddots & \vdots  \\
  a_{i_1}^{d-2} & a_{i_2}^{d-2} & \cdots & a_{i_d}^{d-2} & 0 \\
  a_{i_1}^{d-1} & a_{i_2}^{d-1} & \cdots & a_{i_d} ^{d-1} & 1 \\
  a_{i_1}^{d} & a_{i_2}^{d} & \cdots & a_{i_d} ^{d} & b 
\end{pmatrix}=(b-a_{i_1}-a_{i_2}-\dots-a_{i_d})\prod_{k<l}(a_{i_l}-a_{i_k}).\] 
This implies the claim.
\end{proof}

Vardy also claimed that Problem \ref{prob:sum} is NP-complete: our
proof will be completed if we show that this is indeed the case even
if $n=2d+2$. Since we have not found a formal proof of this claim of
Vardy, we will give a full proof of the following claim. For a set $V$
let ${V \choose 3}=\{X\subseteq V: |X|=3\}$.

\begin{cl}\label{cl:2dpl2}
Problem  \ref{prob:sum} is NP-complete even if $n=2d+2$. 
\end{cl}
\begin{proof}
\newcommand{\threeDM}{{\sc Exact-Cover-by-3-Sets}} We will reduce the
well-known NP-complete problem \threeDM\ (Problem SP2 in \cite{gj}) to
this problem. Problem \threeDM\ is the following: given a 3-uniform
family $\cE\subseteq {V \choose 3}$, decide whether
there exists a subfamily $\cE'\subseteq \cE$ such that every
element of $V$ is contained in exactly one member of $\cE'$. We assume
that 3 divides $|V|$, and let $d=|V|/3$, so Problem \threeDM\ asks
whether there exist $d$ disjoint members in $\cE$. First we show
that this problem remains NP-complete even if $|\cE|=2d+2$. Indeed, if
$|\cE|\ne 2d+2$ then let us introduce $3k$ new nodes $\{u_i,v_i,w_i:
i=1,2,\dots,k\}$ where
\begin{itemize}
\item $k$ is such that ${3k \choose 3}-2k\ge 2d+2-|\cE|$ if $|\cE| < 2d+2$, and 
\item $k=|\cE|-(2d+2)$, if $|\cE| > 2d+2$.
\end{itemize}
Let $V^*=V\cup\{u_i,v_i,w_i: i=1,2,\dots,k\}$ and let $\cE^*=\cE\cup
\{\{u_i,v_i,w_i\}: i=1,2,\dots,k\}$ (note that $ |V^*|=3(d+k)$). If
$|\cE| < 2d+2$ then include furthermore $ 2(d+k) + 2 - (|\cE|+k)$ arbitrary new
 sets of size 3 to $\cE^*$ from $ {V^*-V \choose 3}$, but so that $\cE^*$
does not contain a set twice (this can be done by the choice of
$k$).  It is easy to see that $|\cE^*|= 2|V^*|/3+2$, and $V$ can be
covered by disjoint members of $\cE$ if and only if $V^*$ can be
covered by disjoint members of $\cE^*$.

Finally we show that \threeDM\ is a special case of Problem
\ref{prob:sum} in disguise. Given an instance of \threeDM\ by a
3-uniform family $\cE\subseteq {V \choose 3}$, consider the
characteristic vectors of these 3-sets as different positive
integers written in base 2 (that is, assume a fixed ordering of the
set $V$, then the characteristic vectors of the members in $\cE$
are 0-1 vectors corresponding to different binary integers, conatining
3 ones in their representation). These will be the numbers
$a_1,a_2,\dots,a_{|\cE|}$. Let $b=2^{|V|}-1$ be the number
corresponding to the all 1 characteristic vector, and let
$d=|V|/3$. Observe that there exist $d$ disjoint members in $\cE$
if and only if there are indices $1\le i_1<i_2<\dots<i_d\le |\cE|$
such that $b=a_{i_1}+a_{i_2}+\dots+a_{i_d}$.
(You will need to prove a small claim about the maximal number of ones
in the binary representation of a sum of positive integers.)  This
together with the previous observation proves our claim.
\end{proof}
By combining Claims \ref{cl:vardy} and \ref{cl:2dpl2} we obtain the
proof of Theorem \ref{thm:CdC} as follows. Consider an instance of
Problem \ref{prob:sum} with $n=2d+2$ and let $D=d+1$. Claim
\ref{cl:vardy} states that this instance has a solution if and only if
the $(d+1)\times (n+1)=D\times (2D+1)$ matrix defined in the claim has
$D$ linearly dependent columns, which must be NP-hard to decide by
Claim \ref{cl:2dpl2}.
\end{proof}

\begin{cor}\label{cor:CutdCut}
The problem of deciding whether an (explicitly given) linear matroid
contains two disjoint cuts is NP-complete.
\end{cor}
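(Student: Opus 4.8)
The plan is to derive this immediately from Theorem~\ref{thm:CdC} by passing to the dual matroid. Recall that a cut of a matroid $M$ is, by definition, an inclusionwise minimal set meeting every base, i.e.\ a cocircuit of $M$, which is the same thing as a circuit of the dual matroid $M^*$. Consequently $M$ contains two disjoint cuts if and only if $M^*$ contains two disjoint circuits. Membership in NP is straightforward: a certificate is a pair of disjoint subsets $C_1,C_2$ of the ground set together with, for each $C_i$, the verification that $C_i$ is a cocircuit, which amounts to checking $r(E\setminus C_i)=r(M)-1$ and $r((E\setminus C_i)\cup\{e\})=r(M)$ for every $e\in C_i$; all these ranks are ranks of column submatrices of the given representation and are computable in polynomial time.

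For NP-hardness I would reduce from the problem of Theorem~\ref{thm:CdC}. Start from a linear matroid $N$ presented by an explicit rational matrix $A$, for which it is NP-complete to decide whether $N$ has two disjoint circuits. First compute, in polynomial time, a standard-form representation $[\,I_r \mid A'\,]$ of $N$ (fix a basis by Gaussian elimination and permute columns accordingly); then $[\,-{A'}^{\mathsf T} \mid I_{n-r}\,]$ is an explicit representation of the dual matroid $N^\ast$, obtained in polynomial time. Output the linear matroid $M:=N^\ast$ (via this matrix). Since matroid duality is an involution, $M^\ast=N$, so by the observation above $M$ has two disjoint cuts if and only if $N$ has two disjoint circuits. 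This completes the reduction, hence the proof.

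The only point that needs (entirely routine) care is the bookkeeping for the dual representation: that a basis can be extracted and the matrix put into standard form in polynomial time, and that the transpose construction really represents $M^\ast$. These are classical facts, so I do not expect any genuine obstacle. As an alternative that sidesteps even this, one could dualize the concrete $D\times(2D+1)$ matrix produced in the proof of Theorem~\ref{thm:CdC} — it has full row rank $D$, so its column matroid has rank $D$ and its dual is again an explicit linear matroid — and simply observe that "$D$ linearly dependent columns in the original matrix" translates, under duality, into "two disjoint cuts in the dual matroid", giving the same conclusion.
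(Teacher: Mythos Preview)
Your argument is correct and follows essentially the same approach as the paper: both exploit that cuts of $M$ are precisely the circuits of $M^\ast$, that the dual of a linear matroid is linear with an explicitly computable representation, and then invoke Theorem~\ref{thm:CdC}. You simply spell out in more detail the NP membership and the standard-form dual construction, which the paper leaves implicit.
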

\begin{proof}
Since the dual matroid of the linear matroid is also linear, and we
can construct a representation of this dual matroid from the
representation of the original matroid, this problem is equivalent
to the problem of deciding whether a linear matroid contains two
disjoint circuits, which is NP-complete by Theorem \ref{thm:CdC}.
\end{proof}

\bibliographystyle{amsplain} \bibliography{bpath}

\end{document}